  \algrenewcommand\alglinenumber[1]{\textcolor{gray}{\scriptsize \tt #1:}}
\newif\iffinal
\def\ps@headings{%
  \let\@oddhead\@empty
  \let\@evenhead\@empty
  \def\@oddfoot{\hfil\tiny\thepage\hfil}%
  \def\@evenfoot{\hfil\tiny\thepage\hfil}%
  \let\@mkboth\markboth
  \let\sectionmark\@gobble
  \let\subsectionmark\@gobble
}
\newcommand{\trans}[1]{\stackrel{{#1}}{\longrightarrow}}
\newcommand\ak[1]{}
\newcommand{\LTLX}{\mathsf{X}}
\newcommand{\LTLG}{\mathsf{G}}
\newcommand{\LTLF}{\mathsf{F}}
\newcommand{\LTLFG}{\mathsf{FG}}
\newcommand{\LTLGF}{\mathsf{GF}}
\newcommand{\LTLU}{\,\mathcal{U}}
\newcommand{\impl}{\to}
\newcommand\SuffLang{{\sf Suffix}}
\newcommand\bbN{\mathbb{N}}
\newcommand\mc\mathcal
\newcommand\x\times
\newcommand\pto\rightharpoonup
\newcommand\ellmo{{\ell-1}}
\newcommand\ellpo{{\ell+1}}
\newcommand\Vinit{V^\mathit{init}}
\newcommand\projDl{{\mid \mc F^\ell}}
\newcommand\projDll{{\mid \mc F^{\ell+1}}}
\newcommand\lPlus{{\ell+1}}
\newcommand\Dl{\mc F^\ell}
\newcommand\Fl{\mc F^\ell}
\newcommand\Fellmo{\mc F^\ellmo}
\newcommand\Fellpo{\mc F^\ellpo}
\newcommand\Dll{\mc F^{\ell+1}}
\newcommand\Dlp{\mc F^{\ell'}}%
\newcommand\projDlp{{\mid \mc F^{\ell'}}}%
\newcommand\Gl{{\mc G^\ell}}
\newcommand\mcG{\mc G}
\newcommand\tmcG{{\tilde{\mc G}}}
\newcommand\DFA{DfA\xspace}
\newcommand\DFAs{DfAs\xspace}
\newcommand\NFA{NfA\xspace}
\newcommand\NFAs{NfAs\xspace}
\newcommand\sem[1]{L(#1)}
\newcommand\simFor[1]{\sim_{#1}}
\newcommand\li{\begin{itemize}}
\newcommand\il{\end{itemize}}
\renewcommand{\-}\item
\newcommand\lo{\begin{enumerate}}
\newcommand\ol{\end{enumerate}}
\newcommand\parbf[1]{\smallskip\noindent\textbf{#1}}  
\newcommand\parit[1]{\smallskip\noindent\textit{#1}}  
\newcommand\oneexp{2^{O(n)}}
\newcommand\twoexp{2^{2^{O(n)}}}
\newcommand\oneexpfi{2^{O(|\varphi|)}}
\newcommand\twoexpfi{2^{2^{O(|\varphi|)}}}
\newcommand\Iff\Leftrightarrow
\newcommand\Impl\Rightarrow
\let\emptyset\varnothing
\newtheorem{fact}[theorem]{Fact}
\newtheorem*{fact*}{Fact}
\newtheorem*{assumption*}{Assumption}
\title{A Naturally-Colored Translation from LTL\\ to Parity and COCOA}
\titlerunning{A Naturally-Colored Translation from LTL to Parity and COCOA}
\author{R\"udiger Ehlers}{Clausthal University of Technology, Germany \and \url{http://www.ruediger-ehlers.de}}{ruediger.ehlers@tu-clausthal.de}{https://orcid.org/0000-0002-8315-1431}{Funded by Volkswagen Foundation within its Momentum framework under project no. 9C283}
\author{Ayrat Khalimov}{Clausthal University of Technology, Germany}{ayrat.khalimov@tu-clausthal.de}{https://orcid.org/0000-0001-8277-5501}{Funded by Volkswagen Foundation within its Momentum framework under project no. 9C283} 
\authorrunning{R.Ehlers and A.Khalimov} 
\keywords{Temporal Logic, Automata over Infinite Words, Canonical Automata, Automaton Minimization, Parity Automata, Determinization} 
\begin{document}

\setlength{\mathindent}{10pt}   



\EventEditors{Claudia Faggian and Joost-Pieter Katoen}
\EventNoEds{2}
\EventLongTitle{41st Annual Symposium on Logic in Computer Science (LICS 2026)}
\EventShortTitle{LICS 2026}
\EventAcronym{LICS}
\EventYear{2026}
\EventDate{July 20--23, 2026}
\EventLocation{Lisbon, Portugal}
\EventLogo{}
\SeriesVolume{380}
\ArticleNo{71}

\maketitle

\begin{abstract}
  Chains of co-Büchi automata (COCOA) have recently been introduced as a new canonical representation of omega-regular languages.
  The co-Büchi automata in a chain assign each omega-word its natural color, which depends only on the language itself and not on the chosen automaton representation.
  Automata in such a chain can be minimized in polynomial time and are good-for-games,
  making this representation attractive for verification and reactive synthesis.
  However, in these applications, specifications are usually given in linear temporal logic (LTL).
  To make COCOA useful, an LTL specification must first be translated into the chain of automata.
  The only translation currently known proceeds via deterministic parity automata (LTL$\,{\to}\,$DPA$\,{\to}\,$COCOA),
  where the first step ignores natural colors and requires involved constructions due to Safra or Esparza et al.
  This raises the question of whether, by exploiting the definition of the natural color of words,
  one can avoid such constructions and obtain a direct translation from LTL to COCOA.

  In this paper, we present a simple yet optimal translation from LTL to COCOA,
  as well as a variant that translates LTL into DPA.
  The translation represents a new path from LTL to DPA and exploits the definition of natural colors.
  It relies on standard operations on weak alternating automata,
  the Miyano-Hayashi breakpoint construction,
  the subset construction, and simple graph algorithms.
  Starting from weak alternating automata,
  the procedure also applies to specifications in linear dynamic logic.
  The procedure runs in asymptotically optimal doubly exponential time and produces automata of asymptotically optimal size.
\end{abstract}

\section{Introduction}
Canonical forms play an important role in computer science.
A typical canonical form is born from an insight about the nature of objects it describes,
leading to the development of efficient algorithms manipulating the objects.
In the context of finite words, the Myhill-Nerode theorem about regular languages brought to life a canonical form of finite automata, and led to efficient automata minimization and learning algorithms.

Omega-regular languages do not possess a simple Myhill-Nerode-like characterization.
This led to the flourishing of canonical forms based on representing a language as a family of automata over finite words~\cite{10.1007/3-540-58027-1_27,MALER199793,ANGLUIN201657,10.1007/978-3-031-45329-8_3}.
These models are based on the insight that an omega-regular language can be characterized by the language of its ultimately-periodic words (those of the form $u v ^\omega$).
As a result, these models operate on finite words of the form $u \cdot v$,
which makes them a perfect fit for automata-learning tasks but ill-suited for classical problems of model checking and synthesis.
An alternative canonical representation that does not assume words to be ultimately periodic, and hence is more conventional, was recently introduced by Ehlers and Schewe in~\cite{DBLP:conf/fsttcs/EhlersS22}.
It is based on the notion of \emph{natural colors}.

Natural colors assign colors to words based solely on an omega-regular language and independently of its representation.
This allows an omega-regular language $L$ to be represented as a unique chain of co-B\"uchi languages $L^1 \supset \ldots \supset L^k$,
where $L^i$ is the set of words whose natural color is greater or equal $i$,
and $k$ is the smallest number of colors in a parity automaton recognizing $L$.
By definition, a word belongs to $L$ if and only if its natural color wrt.\ $L$ is even.
Each co-B\"uchi language $L^i$ can be represented by a canonical (and minimal) transition-based history-deterministic co-B\"uchi automaton~\cite{DBLP:journals/lmcs/RadiK22}.
Thus, every omega-regular language $L$ has a unique representation as a chain of such co-B\"uchi automata (\emph{COCOA}) $\mc A^1,\ldots,\mc A^k$.

The two distinguishing features of COCOA ---
(1) the chain representation with clearly defined semantics, and
(2) the minimality and history-determinism of automaton at each level ---
can potentially be exploited to design better verification and synthesis algorithms. For instance,
COCOA were recently used to solve a reactive synthesis problem that previous tools were unable to handle~\cite{EK24}.

However, verification and synthesis problems are usually defined over system behavior descriptions written in linear temporal logic (LTL).
Therefore, to make COCOA useful to verification and synthesis applications,
one needs a procedure for translating LTL to COCOA.
Currently, the only known way to obtain COCOA from LTL formulas is by going through deterministic parity word  automata (DPA):
LTL $\to$ DPA $\to$ COCOA~\cite{DBLP:conf/fsttcs/EhlersS22}.
Although the second step is relatively simple, the first step represents a challenge.
For instance,
one can first translate the LTL property to a nondeterministic Büchi automaton and then apply Safra's determinization construction~\cite{DBLP:conf/focs/Safra88}.
However, it is well-known that implementing Safra's construction efficiently is challenging.
Alternatively, one can use the LTL-to-DRA translation of Esparza et al.~\cite{DBLP:journals/jacm/EsparzaKS20} that aims to generate automata with a more regular structure and is thus more amenable to efficient implementations.
Nevertheless, in both cases, the overall translation neglects the insight of natural colors in the main challenging step.
This observation raises the question ``\emph{Is there a direct translation from LTL to COCOA?}''.

The second question tackled in this paper concerns the problem of translating from LTL to DPA.
Modern approaches to reactive synthesis and probabilistic verification often reduce to solving parity games~\cite{renkin.23.fmsd,strix,KNP11}, and hence translate from LTL to DPA as an integral part.
Experience from reactive synthesis~\cite{SyntComp} shows that current approaches to this translation still constitute a bottleneck, indicating a need for further research.
We hence ask whether natural colors can enable a conceptually simpler translation LTL$\,{\to}\,$DPA,
which in turn could enable efficient implementations.

In this paper,
we describe a simple yet optimal translation from LTL to COCOA,
and a variant that translates into DPA.
Our translation procedure starts from weak alternating automata,
hence it can also translate formulas in linear dynamic logic (LDL)~\cite{DBLP:conf/ijcai/GiacomoV13,DBLP:journals/iandc/FaymonvilleZ17} which subsumes LTL.
The procedure uses simple operations on weak alternating automata,
Miyano-Hayashi's breakpoint construction~\cite{MHConstruction},
the subset construction, and easy graph algorithms.
The translation works in asymptotically optimal doubly exponential time.
The procedure constitutes a novel path for translating from LTL/LDL to COCOA and DPA,
avoiding detours to Safra~\cite{DBLP:conf/focs/Safra88} and Esparza et al~\cite{DBLP:journals/jacm/EsparzaKS20}.
We now describe ideas behind the new approach.

\paragraph*{LDL$\,\to\,$COCOA}
We translate a given LDL formula into a weak alternating automaton $\mc A$ and its dual $\bar {\mc A}$ using standard constructions~\cite{MSS88,DBLP:journals/iandc/FaymonvilleZ17}.
We then apply the Miyano-Hayashi breakpoint construction~\cite{MHConstruction}
to eliminate universal transitions,
and obtain two nondeterministic B\"uchi automata,
called obligation graphs:
$\mc G_\mc A$ with language $L(\mc A)$ and $\mc G_{\bar{\mc A}}$ with language $\overline{L(\mc A)}$.
Using simple operations on alternating automata and Miyano-Hayashi's construction,
we construct a machine that tracks the current suffix language wrt.\ $L(\mc A)$.
States of this suffix-language-tracking machine (SLTM) are labelled by the vertices of obligation graphs $\mc G_\mc A$ and $\mc G_{\bar {\mc A}}$ that are reachable by prefixes whose suffix language matches the one tracked by the current SLTM state.
The SLTM and the special labelling enables the construction of the first level $\mc A^1$ of the COCOA for $L(\mc A)$,
using the product operation on the SLTM and $\mc G_{\bar {\mc A}}$,
the subset construction, and simple graph analysis.
This construction relies on floating automata~\cite{ehlers2025rerailingautomata},
which are looping (safety) automata synchronized with the SLTM and allow a ``run jump''.
They are a convenient replacement for co-B\"uchi automata,
tailored to represent COCOA-level languages,
and can be translated back and forth in polynomial time.
Once the level-1 floating automaton is constructed,
the construction at level-2 uses a similar procedure except that it builds the product not with the SLTM but with the level-1 floating automaton.
Each such floating automaton can be constructed in time $\twoexp$\!, where $n$ is the number of states in $\mc A$,
and translated in polynomial time into a history-deterministic co-B\"uchi automaton,
then minimized-canonized using the effective procedure of~\cite{DBLP:journals/lmcs/RadiK22}.
Since the number of levels of COCOA is at most $\oneexp$,
this yields an overall $\twoexp$-time procedure for translating from LDL to COCOA.

\paragraph*{LDL$\,\to\,$DPA}
The procedure to translate LDL into DPA reuses the floating automata constructed by the LDL-to-COCOA procedure.
The DPA simulates these floating automata in parallel and \emph{tries} to follow the automaton on the deepest level possible.
When the DPA does not manage to reach the deepest possible floating automaton,
it ensures that the parity of the level it actually reaches is preserved.
Each DPA state is a state of some floating automaton, and each DPA transition is colored with the index of the deepest-level floating automaton
that can take that transition assuming it starts from the state of the DPA.
Every floating automaton constructed by the LDL-to-COCOA procedure has $\twoexp$ states,
there are at most $\oneexp$ levels in the COCOA,
so the constructed DPA has at most $\twoexp$ states.

\paragraph*{LDL$\,\to\,$LDBA}
Limit-deterministic B\"uchi automata (LDBA) were introduced as a compromise between fully deterministic and general nondeterministic automata
for solving the qualitative probabilistic model-checking problem~\cite{10.1145/210332.210339}.
An LDBA is a special kind of nondeterministic B\"uchi automaton
whose state set is partitioned into two disjoint parts.
Every run starts in the first part, which is nondeterministic and has no accepting states.
From the first part, every accepting run must eventually transit into the second part;
there are no transitions from the second to the first part.
The second part is deterministic and contains accepting states.
Our procedure for translating LDL into LDBA
uses deterministic floating automata $\mc F^1\!,\ldots,\mc F^k$ constructed during the translation LDL\,$\to$\,COCOA.
By that construction,
a word belongs to the language if and only if there exists an even number $i \in \{0,\ldots,k\}$
such that the word is accepted by $\mc F^i$ and rejected by $\mc F^{i+1}$\!,
where, by convention, $\mc F^0$ accepts everything and $\mc F^{k+1}$ accepts nothing.
Our procedure constructs an LDBA whose first part is the SLTM itself.
From the SLTM, a run can jump into a deterministic component that always follows the transitions of $\mc F^i$
but infinitely often falls out of $\mc F^{i+1}$.
We can construct such a deterministic component with a B\"uchi acceptance condition thanks to
simple safety acceptance condition of floating automata.

\subparagraph*{Paper structure.}
Section~\ref{sec:prelims} introduces known concepts
including weak alternating automata and COCOA, and states a few useful facts about them.
Section~\ref{sec:WAA-to-COCOA} first defines SLTMs and floating automata,
then describes the main translation procedure LDL\,$\to$\,COCOA.
Section~\ref{sec:LDL-to-DPA} describes our translation LDL\,$\to$\,DPA,
and Section~\ref{sec:LDL-to-LDBA} describes the translation LDL\,$\to$\,LDBA.
Section~\ref{sec:related-works} provides a detailed comparison of our work with \cite{DBLP:journals/jacm/EsparzaKS20},
and Section~\ref{sec:conclusion} concludes.

\section{Preliminaries}\label{sec:prelims}
\ak{todo: add ``at most'' to complexity claims}

Let $\bbN = \{0,1,\ldots\}$ denote the set of natural numbers including $0$.
We assume that the reader is familiar with LTL;
its formulas are built over atomic propositions
using the standard Boolean operators and the temporal operators $\LTLU$ and $\LTLX$,
as well as the derived operators $\LTLG$ and $\LTLF$.
Familiarity with LDL is not required.
Satisfaction of LDL and LTL formulas is defined wrt.\ infinite words in $\Sigma^\omega$,
where $\Sigma = 2^{\mathit{AP}}$ is the alphabet over atomic propositions $\mathit{AP}$.
The set of models of an LTL/LDL formula $\varphi$ is denoted by $L(\varphi)$.

\subsection{Weak alternating automata}

A useful automaton model for representing LDL/LTL specifications on a more technical level are \emph{weak alternating automata}.

Given a set of propositions $P$,
let $\mc B^+(P)$ denote the set of positive Boolean formulas built from $P$ using conjunction and disjunction.

A \emph{weak alternating B\"uchi automaton} is a tuple $\mc A = (\Sigma,Q,q_0,\delta,F)$
where
$\Sigma$ is an alphabet,
$Q$ is a finite set of states,
$\delta : Q \x \Sigma \to \mc B^+(Q)$ is a transition function,
$q_0$ is an initial state, and
$F \subseteq Q$ is a set of accepting states.
The prefix ``weak'' in the name means that the automaton has a total preorder $\gtrsim$ on states $Q$
which satisfies:
(i) for every $q,x,q'$ where $q'$ appears in $\delta(q,x)$: $q \gtrsim q'$, and
(ii) states equivalent wrt.\ $\gtrsim$ are either all accepting or all rejecting (not in $F$).
Intuitively,
an alternating automaton is weak if in every of its strongly connected components (when ignoring edge labels),
either all states are accepting, or all states are rejecting.

We do not define the semantics of alternating automata, as it is not needed in the paper.
The following result follows from the proof of Thm.2 in \cite{DBLP:journals/iandc/FaymonvilleZ17}.

\newcommand\factLDLtoWeakStatement{%
  Every LDL (and hence LTL) formula $\psi$ can be translated
  into a weak alternating B\"uchi automaton recognizing the models of $\psi$
  with $O(|\psi|)$ states and a transition function of size $2^{O(|\psi|)}$\! in time $2^{\textit{poly}(|\psi|)}$\!.
  }

\begin{fact}[\cite{DBLP:journals/iandc/FaymonvilleZ17}]
  \label{fact:LDL-to-Weak}
  \factLDLtoWeakStatement
\end{fact}

This paper assumes that the transition functions of alternating automata
do not use pathologically large Boolean formulas,
and the alphabet size is at most linearly exponential in the number of automaton states;
this holds for automata obtained by translation from LDL.

\begin{assumption*}
  We consider only weak alternating automata with $n$ states over alphabets satisfying the following condition:
  there exists a function $B(n) \in \oneexp$, depending only on $n$, such that
  the size of the alphabet and the size of each Boolean formula in the transition function are bounded by $B(n)$.
\end{assumption*}

\begin{lemma}\label{lem:A-trans-func-bound}
  Under the above assumptions,
  the size of the transition function of a weak alternating automaton with $n$ states is at most $\oneexp$.
\end{lemma}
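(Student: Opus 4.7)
The plan is to obtain the stated bound by a direct arithmetic combination of the three quantities that together determine $|\delta|$, exactly in the form already sketched in the preceding proof of Fact~\ref{fact:LDL-to-Weak}. Concretely, I would start from the observation that the transition function is completely described by tabulating, for every pair $(q,x) \in Q \times \Sigma$, the $\PCNF(Q)$-formula $\delta(q,x)$. Hence
\[
  |\delta^{\mc A}| \;\leq\; |Q|\cdot|\Sigma|\cdot b,
\]
where $b$ is an upper bound on the size of any single Boolean formula appearing as a value of $\delta$.

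Next I would instantiate each factor using the two stated assumptions, writing $n = |Q|$. The factor $|Q|$ is simply $n$. The factor $|\Sigma|$ is bounded by $2^{O(n)}$ by the assumption that the alphabet is at most linearly exponential in the number of states. The factor $b$ is bounded by $2^{O(n)}$ by the assumption that no pathologically oversized Boolean functions appear in $\delta$. Multiplying these three contributions yields
\[
  |\delta^{\mc A}| \;\leq\; n\cdot 2^{O(n)}\cdot 2^{O(n)} \;=\; 2^{O(n)},
\]
which is exactly the bound $\oneexp$ promised by the lemma.

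There is essentially no obstacle here: the statement is a bookkeeping consequence of the assumptions, and the only thing worth being careful about is not double-counting --- the alphabet already sits inside the domain of $\delta$, so one should make sure that the "size of a Boolean function" refers only to the $\PCNF(Q)$-formula in a single transition rather than the whole table. I would also explicitly note that $n$ absorbs the leading factor (since $n = 2^{\log n} \leq 2^{O(n)}$), so the final expression can be written cleanly as $2^{O(n)}$.
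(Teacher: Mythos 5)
Your proposal is correct and follows essentially the same route the paper takes: the bound $|\delta^{\mc A}| \leq |Q|\cdot|\Sigma|\cdot b$ combined with the two assumptions (alphabet size and Boolean-formula size both in $2^{O(n)}$) is exactly the bookkeeping already used in the proof of Fact~\ref{fact:LDL-to-Weak}, which the lemma simply restates under the stated assumptions. Your cautionary remark about what $b$ measures and about $n$ being absorbed into $2^{O(n)}$ is accurate and does not change the argument.
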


\subsection{Automata DPA, NBA, NCA, HD-NCA}
We explicitly define nondeterministic automata with transition- and state-based acceptance
to keep the definition of alternating automata relatively simple.

A \emph{nondeterministic parity automaton} is a tuple $\mc A = (\Sigma, Q, Q_0, \delta, \alpha)$
where $\Sigma$ is an alphabet,
$Q$ is a set of states,
$Q_0 \subseteq Q$ are initial states, and
$\delta \subseteq Q \x \Sigma \x Q$ is a transition relation.
In automata with \emph{transition-based} parity acceptance,
the function $\alpha: \delta \to \bbN$ maps transitions to colors,
while with \emph{state-based} parity acceptance, $\alpha: Q \to \bbN$ maps states to colors.

A \emph{path} of $\mc A$ is a (finite or infinite) sequence of transitions
$\pi = (q_1,x_1,q_2)(q_2,x_2,q_3)\ldots$
where $(q_i, x_i, q_{i+1}) \in \delta$ for all $i \in \{1,\ldots,|\pi|\}$
when $\pi$ is finite and for all $i \geq 1$ when $\pi$ is infinite.
A \emph{run} of $\mc A$ on an infinite word $x_0 x_1 \ldots$
is an infinite path $(q_0,x_0,q_1)(q_1,x_1,q_2)\ldots$
that starts in an initial state $q_0 \in Q_0$;
a word might have one run, several runs, or no runs at all.
The \emph{color sequence} of a run $(q_0,x_0,q_1)(q_1,x_1,q_2)\ldots$
is the sequence $c_1 c_2 \ldots$,
where, for all $i$,
$c_i = \alpha(q_i)$ in the case of state-based automata
and $c_i = \alpha(q_i,x_i,q_{i+1})$ for transition-based automata.
A run is \emph{accepting} if the minimal color appearing infinitely often in the color sequence is even
(min-even acceptance).
A word is \emph{accepted} if it has an accepting run.
The \emph{language} of $\mc A$ is the set of words accepted by $\mc A$.

\emph{B\"uchi automata} are defined by restricting the image of $\alpha$ to the two colors $\{0,1\}$;
in this case, instead of $\alpha$, we use $F\subseteq \delta$ (or $F\subseteq Q$),
and call transitions (or states) in $F$ \emph{accepting}.
Then, a run is accepting if and only if an accepting transition (or state) is visited infinitely often.
Similarly,
in \emph{co-B\"uchi} automata, the function $\alpha$ is restricted to the two colors $\{1,2\}$;
then, a run is accepting if and only if eventually only accepting transitions/states are visited.

An automaton is \emph{deterministic}
if $|Q_0|=1$ and for every $q \in Q, x\in \Sigma$: $|\{(q,x,q'): (q,x,q') \in \delta\}| = 1$.

A nondeterministic parity automaton $\mc A$ is \emph{history-deterministic} (also called \emph{good-for-games})
if there exists a strategy function $\lambda : \Sigma^* \to Q$
such that for every word $x_0 x_1 \ldots$ accepted by $\mc A$,
the sequence $(q_0,x_0,q_1) (q_1,x_1,q_2)\ldots$,
where $q_i = \lambda(x_0 \ldots x_{i-1})$ for all $i \geq 0$,
is an accepting run of $\mc A$ on the word.

An automaton $\mc A = (\Sigma,Q,Q_0,\delta,\alpha)$
can be associated with the following directed graph $(Q,\to)$:
$q\to q'$ if and only if
there is $x$ such that $q' \in \delta(q,x)$.
\emph{An SCC (strongly connected component) of $\mc A$} is an SCC of the associated graph $(Q,\to)$.
An SCC is \emph{maximal}
if it cannot be extended without losing the property of being strongly connected.

For automata types, we use standard abbreviations:
DPA, NBA, NCA, and we add ``t'' to highlight transition-based acceptance (e.g.\ HD-tNCA).

\subsection{Canonical co-B\"uchi automata}

Recall that deterministic and nondeterministic co-B\"uchi automata have the same expressive power.
A language is \emph{co-B\"uchi} if there is an NCA recognizing the language.
Size-wise,
HD-tNCAs are exponentially more succinct than tDCAs~\cite{kuperberg2015determinisation}.
HD-tNCAs also admit a canonical form,
introduced in~\cite{DBLP:journals/lmcs/RadiK22}.
Automata in this canonical form are the smallest in the class of HD-tNCAs.

\begin{fact}[\cite{DBLP:journals/lmcs/RadiK22}]\label{fact:min-can-HD-tNCA}
  Any HD-tNCA can be transformed into the canonical form in polynomial time.%
\end{fact}

\subsection{Canonical chain representation (COCOA)}

We start by defining a few helpful notions.
Let $\SuffLang : 2^{\Sigma^\omega} \!\!\times \Sigma^* \to 2^{\Sigma^\omega}$
be a function mapping a language $L \subseteq \Sigma^\omega$ and a finite word $p \in \Sigma^*$
to $\{ w \in \Sigma^\omega \mid p\cdot w \in L\}$,
i.e., the \emph{suffix language} of $L$ for $p$.
For brevity, we may write $L^\textit{suff}(p)$ instead of $\SuffLang(L,p)$.
The set of all suffix languages of $L$ is denoted by $L^\textit{suff}$.
We write $p_1 \simFor{L} p_2$ when $L^\mathit{suff}(p_1) = L^\mathit{suff}(p_2)$;
the relation $\sim_L$ is called \emph{right-congruence relation}.
A finite word $u \in \Sigma^*$ \emph{preserves} the suffix language of $p$ wrt.\ $L$
if $L^\mathit{suff}(p \cdot u) = L^\mathit{suff}(p)$.
For an infinite word $w = w_0 w_1 \ldots$ and an infinite set $J \subseteq \bbN$ of positions $j_1<j_2<\ldots$,
we call a word $w'$ the result of a \emph{suffix-language-preserving injection} into $w$ at $J$
if it is of the form $w' = w_0 ... w_{j_1-1} u_1 w_{j_1} ... w_{j_2-1}u_2w_{j_2}\ldots$,
where, for every $i$, the finite word $u_i$ preserves the suffix language of $w_0 \ldots w_{j_i-1}$ wrt.\ $L$.
Note that, because $J$ is infinite,
a suffix-language-preserving injection into a word $w \in L$ may result in $w' \not\in L$, and vice versa.
For instance,
consider the language $\sem{\LTLFG a}$ over $\Sigma=\{a,\bar a\}$ and its word $a^\omega$:
the word $(a\bar a)^\omega$ is the result of a suffix-language-preserving injection into $a^\omega$ at positions $J = \{1,2,3,\ldots\}$
and $(a\bar a)^\omega \not\in \sem{\LTLFG a}$.
We write $w' \in \text{SLII}^L_{(J,w)}$, or $\text{SLII}^L_{(J,w)}\ w'$,
when $w'$ is the result of a suffix-language-preserving injection into $w$ at positions $J$,
and we may omit $L$, $J$, $w$ when they are clear from the context.

The \emph{canonical chain-of-co-B\"uchi-automata representation (\hspace{-0.05mm}\textbf{COCOA}\hspace{-0.3mm})} of an omega-regular language $L$
is the unique sequence of canonical HD-tNCAs $(\mc A^1, \ldots, \mc A^k)$
where each automaton $\mc A^i$ is defined through its language $L^i$ inductively:
let $L^0 = \Sigma^\omega$, then for all $i$:
$$
  L^i =
     \big\{
       w \in L^{i-1} \mid
       \forall J.\,\exists \text{SLII}^L_{(J,w)} w'{:~}
       w' \in L^{i-1} \cap \left[\!\!\!\!\begin{aligned} \renewcommand{\arraystretch}{0.8} \begin{array}{c} \bar{L} \textit{\small~for odd $i$ } \\ L \textit{\small~for even $i$} \end{array}\end{aligned}\!\!\!\right]\!
     \big\},
$$
and $k \geq 0$ is the largest number such that $L^k$ is non-empty
($k=0$ means that the automaton sequence is empty).
The number $k+1$ defines the number of colors of the COCOA.
The \emph{natural color} of a word $w$ wrt.\ $L$
is the maximal index of an automaton in the chain accepting the word,
it is $0$ when $\mc A^1$ rejects $w$ or the chain is empty.
Ehlers and Schewe~\cite{DBLP:conf/fsttcs/EhlersS22} showed
that every omega-regular language has a unique COCOA, and
that a word $w$ belongs to $L$ if and only if the natural color of $w$ wrt.\ $L$ is even\label{page:def:cocoa} (max-even acceptance).

We elaborate on the definition.
First,
observe that every COCOA $(\mc A^1,\ldots,\mc A^k)$ forms a chain of shrinking languages:
$L^1 \supset \ldots \supset L^k$,
where $L^i=L(\mc A^i)$ for every $i$.
Second,
each language $L^{\text{even } i}$ is obtained
by removing from $L^{i-1}$ every $\bar L$-word
that, for some $J$, cannot be flipped into an $L$-word of $L^{i-1}$ via the SLII operation.
Dually,
each $L^{\text{odd } i}$ is obtained
by removing from $L^{i-1}$ every $L$-word
that, for some $J$, cannot be flipped into an $\bar L$-word of $L^{i-1}$.
Thus, even levels remove certain negative words, whereas odd levels remove certain positive words.

\begin{example}
  In the following examples,
  we do not explicitly define the individual automata of COCOA but rather their languages.
  \emph{Example 1}: Consider the safety language $L(\LTLG a)$ over the alphabet $\Sigma = 2^{\{a\}}$.
  Its COCOA consists of a single automaton $\mc A^1$ with $L(\mc A^1) = L(\LTLF \neg a)$.
  Indeed, for every $w\in L(\mc A^1)$ and suffix index set $J\subseteq \bbN$,
  we take the SLII $w' = w$ to exhibit $w' \in \bar L$.
  Note that for the original language $L(\LTLG a)$,
  only insertions of the form $a^*$ are suffix-language-preserving,
  implying that no other word can be added to $L(\mc A^1) = L(\LTLF \neg a)$.
  Every word $w \in L(\LTLG a)$ has natural color $0$.
  \emph{Example 2}: Consider the language $L(\LTLFG a)$.
  Its COCOA $(\mc A^1,\mc A^2)$ has $L(\mc A^1) = \Sigma^\omega$ and $L(\mc A^2) = L(\LTLFG a)$, justified as follows.
  For any prefix $p$, the suffix language of $L(\LTLFG a)$ after reading $p$ equals $L(\LTLFG a)$,
  hence inserting anything preserves the suffix language.
  In particular, we can insert $\bar a$, yielding the SLII $w' \in \overline{L(\LTLFG a)}$ regardless of $w$,
  thus $L(\mc A^1) = \Sigma^\omega$.
  The case of $\mc A^2$ is straightforward.
  \emph{Example 3}:
  Consider the language $L(\LTLGF a \to \LTLGF b)$ over $\Sigma = 2^{\{a,b\}}$:
  it has the COCOA $(\mc A^1,\mc A^2)$ with
  $L(\mc A^1) = L(\LTLFG\neg b)$ and $L(\mc A^2) = L(\LTLFG \neg a \land \LTLFG \neg b)$, justified as follows.
  The language $L(\mc A^1)$ must contain every word from $L(\LTLGF a \land \LTLFG \neg b)$ and
  all $L$-words that can be turned into an $\bar L$-word by the SLII operation for every given $J$.
  Since the original language is liveness LTL, inserting anything is suffix-language-preserving.
  Then is not hard to see that $L(\mc A^1) = L(\LTLFG \neg b)$.
  The language $L(\mc A^2)$ is the result of removing $\bar L$-words from $L(\mc A^1)$
  for which there exists $J$ such that no SLII at $J$ results in a word in $L\cap L(\mc A^1)$.
  We leave it as an exercise to show that $L(\mc A^2) = L(\LTLFG \neg a \land \LTLFG \neg b)$.
  \emph{Example 4}:
  Finally, the language $L(\LTLGF a \to (\LTLGF b \land \LTLFG c))$ over $\Sigma = 2^{\{a,b,c\}}$ has the COCOA $(\mc A^1,\mc A^2,\mc A^3,\mc A^4)$ with
  $L(\mc A^1) = \Sigma^\omega$,
  $L(\mc A^2) = L(\LTLFG \neg a \lor \LTLFG c)$,
  $L(\mc A^3) = L(\LTLFG c \land \LTLFG \neg b)$, and
  $L(\mc A^4) = L(\LTLFG c \land \LTLFG \neg b \land \LTLFG \neg a)$.
\end{example}

\newcommand\factCocoaSizeStatement{%
  Every weak alternating automaton $\mc A$ with $n$ states can be translated into
  the COCOA recognizing $L(A)$ with
  $2^{O(n)}$\! levels and $2^{2^{O(n)}}$\!\! states in total,
  in time $2^{2^{O(n)}}$\!\!.%
}

\begin{fact}[\cite{DBLP:conf/fsttcs/EhlersS22}]\label{fact:cocoa-nof-levels}\label{fact:cocoa-size}
  \factCocoaSizeStatement
\end{fact}
The proof idea is to translate a given weak alternating automaton into deterministic parity automaton using standard constructions~\cite{DBLP:conf/focs/Safra88,DBLP:journals/sttt/EsparzaKRS22},
then use the polynomial-time algorithm of \cite{DBLP:conf/fsttcs/EhlersS22} to translate the DPA into HD-tNCAs of the languages of individual COCOA levels.

The following fact states that the languages of individual COCOA levels are insensitive to congruent-prefix replacement.
\newcommand\factLevelPrefixIndependence{%
  Given an omega-regular language $L$,
  let $L^\ell$ be the language of the $\ell$th level of the COCOA for $L$.
  For any pair of prefixes $p_1,p_2$:\, $p_1 \simFor{L} p_2$ implies $p_1 \simFor{L^\ell} p_2$.
}
\begin{fact}[\cite{DBLP:conf/fsttcs/EhlersS22}]\label{fact:simL-implies-simLl}
  \factLevelPrefixIndependence
\end{fact}

\begin{proof}
  This fact can also be seen from Corollary~9 of \cite{DBLP:conf/fsttcs/EhlersS22}.
  Our proof is by induction on $\ell$;
  for $\ell=0$, where $L^0 = \Sigma^\omega$, the claim holds trivially.

  Suppose the claim holds for $L^\ellmo$, and consider $L^\ell$.
  Fix prefixes $p_1,p_2$ s.t.\ $p_1 \simFor{L} p_2$.
  By the inductive hypothesis,
  $p_1 \simFor{L^\ellmo} p_2$.
  We need to show that $p_1 \simFor{L^\ell} p_2$.
  Given a suffix word $s \in \Sigma^\omega$ such that $w_1=p_1 \cdot s \in L^\ell$,
  we need to prove that $w_2=p_2 \cdot s \in L^\ell$ (the other direction is symmetric).
  Thus,
  we show $(\dagger)$:
  $\forall J.\exists \text{SLII}^L_{(J,w_2)}w_2'\in \hat L\cap L^\ellmo$,
  where $\hat L$ is $L$ for even $\ell$ and $\bar L$ for odd $\ell$.

  Fix an index set $J$.
  Let $J'$ be the subset of $J$ starting from the earliest index greater than $|p_1|$.
  Since $p_1s \in L^\ell$,
  by definition there exists an SLII$^L_{(J'\!,w_1)} w_1' \in \hat L \cap L^\ellmo$.
  In this SLII $w'_1 = p_1 s'$, no injection occurs into $p_1$, all injections are done into $s$.
  We have:
  \li
  \- $p_1 s' \in \hat L ~\stackrel{p_1 \simFor{L} p_2}{\Longrightarrow}~ p_2 s' \in \hat L$.
  \- $p_1 s' \in L^\ellmo ~\stackrel{p_1 \simFor{L^\ellmo} p_2}{\Longrightarrow}~ p_2 s' \in L^\ellmo$.
  \il
  Therefore, $w'_2 = p_2s' \in \text{SLII}^L_{(J,w_2)}$, which proves $(\dagger)$.
  Hence $w_2=p_2s \in L^\ell$, implying $p_1 \simFor{L^\ell} p_2$.
\end{proof}

\section{Translating LDL into COCOA}\label{sec:WAA-to-COCOA}\label{sec:LDL-to-chain}

We want to build a checker in the form of an HD-tNCA
that, given a word $w$, checks the condition
$\forall J.\exists \text{SLII}^L_{(J,w)}~w': w' \in L^{\ell-1} \cap L(\mc G^\ell)$,
where $\Gl$ is a tNBA for $L$ when $\ell$ is even and for $\bar{L}$ when $\ell$ is odd.
Since tracking suffix languages is deeply ingrained in the definition of COCOA levels,
we start by defining suffix-language tracking machines.
We then introduce floating automata,
an automaton formalism tailored to representing individual COCOA levels.
These automata have suffix-language tracking embedded in their definition,
are $L$-congruent-prefix invariant (just like COCOA-level languages (Fact~\ref{fact:simL-implies-simLl})),
and have the simpler safety acceptance condition, although with a ``run jump''.
We then annotate the states of $\Gl$ with suffix languages as well,
using graph-labelling functions.
Finally, we describe a construction that connects all these objects to build the checker,
prove its correctness,
and describe the overall translation procedure.

\subsection{Suffix-language-tracking machines (aka right-congruence automata)}
\label{sec:SLTM}

Consider an omega-regular language $L$ over an alphabet $\Sigma$.
Recall that $L^\textit{suff}$ denotes the set of suffix languages of $L$
and $L^\textit{suff}(p)$ denotes the suffix language of $L$ after reading \text{a finite prefix $p$.}

\emph{The suffix-language-tracking machine (SLTM)} for $L$,
also called \emph{right-congruence automaton~\cite{angluin2024constructing}},
is the machine
$\mc M = (\Sigma, S, s_0, \delta^S)$
with the set of states $S = L^\textit{suff}$,
the initial state $s_0 = L$,
the transition function $\delta^S : S \x \Sigma \to S$
satisfying for every state $s$ and prefix $p$:
$\delta^S(s_0,p) = L^\textit{suff}(p)$,
where we overload the one-letter-successor function $\delta^S$ to accept finite prefixes.
Note that for every pair of prefixes $p_1,p_2$:
$p_1 \sim_L p_2 ~\Iff~ \delta^S(s_0,p_1) = \delta^S(s_0,p_2)$.
For a language represented as a weak alternating automaton,
the SLTM has at most a doubly exponential number of states,
due to existence of a deterministic (parity, Rabin, etc.) automaton recognizing the language
with a doubly exponential number of states.
Figure~\ref{fig:SLTM} gives an example of an SLTM.

\begin{figure}
  \centering
  \begin{minipage}[t]{0.752\linewidth}
    \caption{The SLTM for the language of
             $\varphi = \LTLG(a \impl \LTLX \neg a) \land (\LTLGF a \impl \LTLGF b)$.
             State $s_0$ encodes $L(\varphi)$,
             $s_1$ -- $L(\neg a \land \varphi)$, and
             $s_2$ -- the empty language.
             It is also the SLTM for $L(\neg\varphi)$:
             $s_0$ encodes $L(\neg\varphi)$,
             $s_1$ -- $L(a \lor \neg \varphi)$, and
             $s_2$ -- $\Sigma^\omega$.%
             }
    \label{fig:SLTM}
  \end{minipage}\hspace{0mm}
  \begin{minipage}[t]{0.24\linewidth}
    \vspace{-0.5mm}
    \begin{tikzpicture}[->,>=stealth', auto, node distance=15mm]
      \tikzstyle{every state}=[text=black,font=\footnotesize,initial text={},minimum size=5mm, inner sep=0mm]
      \tikzset{
        every edge/.append style = {font=\footnotesize}
      }

      \node[state,initial left] (s0) {$s_0$};

      \node[state] (s1) [right=6mm of s0] {$s_1$};
      \node[state] (s2) [right=4mm of s1] {$s_2$};

      \path[->] (s0) edge [loop,out=75,in=105,looseness=8] node[above] {$\neg a$} (s0);
      \path[->] (s0) edge [bend right=10] node[below]{$a$} (s1);
      \path[->] (s1) edge [bend right=10] node[above]  {$\neg a$} (s0);
      \path[->] (s1) edge [] node  {$a$} (s2);
      \path[->] (s2) edge [loop,out=75,in=105,looseness=8] node[above] {$*$} (s2);
    \end{tikzpicture}
  \end{minipage}
\end{figure}

The paper~\cite{angluin2024constructing} describes a simple polynomial-time algorithm to compute a machine that is isomorphic to the SLTM for $L(\mc D)$
where $\mc D$ is a deterministic automaton of one of the standard types (B\"uchi, parity, etc.).
The algorithm relies on the ability to check the language equivalence of two deterministic automata,
which can be done in polynomial time.
This algorithm is straightforward to adapt to allow for weak alternating automata as input,
where the equivalence check of two weak alternating automata can be implemented
using complementation, conjunction, and nonemptiness check.
The time complexity is doubly exponential,
due to the doubly exponential worst-case number of states in SLTMs.

\begin{fact}[follows from \cite{angluin2024constructing}]\label{fact:SLTM}
  Given a weak alternating automaton $\mc A$ with $n$ states,
  one can compute in time $\twoexp$\!\! a machine with $\twoexp$\!\! states
  that is isomorphic to the SLTM for $L(\mc A)$.
\end{fact}

\begin{proof}[Proof idea]
  For completeness, we describe the algorithm adapted from \cite[Section~15]{angluin2024constructing}.
  Fix a weak alternating automaton $\mc A = (\Sigma, Q, q_0, \delta, F)$,
  where $\delta: Q\x\Sigma \to \mc B^+(Q)$.
  We compute the following machine $\mc M = (\Sigma, Q^S\!, q_0^S, \delta^S)$.
  Each state is encoded as a formula from $\mc B^+(Q)$.
  The initial state is $q_0^S = q_0$.
  The set of states $Q^S$ and the transition structure $\delta^S$ are grown incrementally
  starting from $Q^S=\{q_0\}$ and the empty $\delta^S$.
  As long as the current set $Q^S$ contains a state $\varphi$ that for some letter $x$ has no successor in $\delta^S$, do the following.
  \li
  \- Define the candidate successor state $\psi_c$ by substituting in $\varphi$ every proposition $q$ with the formula $\delta(q,x)$.

  \- Check if the candidate state $\psi_c$ is language equivalent to some state $\psi$ already present in $Q^S$.
     This can be done by checking language nonemptiness of the alternating automaton starting from the state
     $(\psi_c \land \neg\psi) \lor (\neg\psi_c \land \psi)$.
     This automaton can be computed using simple and-or operations on alternating automata,
     because the automata for $\psi_c$ and $\psi$ are weak and hence their complementation is done by dualizing.

  \- If such $\psi$ exists in $Q^S$,
     set $\delta^S(\varphi,x) = \psi$.
     Otherwise, add $\psi_c$ to $Q^S$ and set $\delta^S(\varphi,x) = \psi_c$.
  \il
  The correctness of the computed machine $\mc M$ follows from two claims:
  (1) all suffix languages are considered, and
  (2) the machine has no two different states representing the same suffix language.
  The second claim is ensured due to the check before adding a new state.
  The first claim follows from the property of alternating automata:
  for every state $q \in Q$ and letter $x$,
  $\SuffLang(L(\mc A_q), x) = \delta(q,x)[\text{substitution}]$,
  where the substitution operation replaces each
  proposition $q'$ with $L(\mc A_{q'})$,
  $\wedge$ with $\cap$, and
  $\vee$ with $\cup$,
  and $\mc A_q$ denotes the automaton starting from $q$.
  The time complexity is dominated by the number of states in the SLTM.
\end{proof}

Finally, let us observe that the SLTMs for a language $L$ and its complement $\bar L$
are isomorphic and their corresponding states are complementation of each other.

\subsection{Floating automata}
\label{subsec:floating-automata}

Floating automata~\cite{ehlers2025rerailingautomata}
are an automaton model that captures the properties of ``nice'' HD-tNCAs of~\cite{DBLP:journals/lmcs/RadiK22}
and which is generalized and tailored for reasoning about COCOA.
Recall from Fact~\ref{fact:simL-implies-simLl} that each COCOA-level language is prefix-invariant wrt.\ $L$,
meaning that any prefix of a word can be replaced with an $L$-congruent prefix.
Moreover, the level languages are co-B\"uchi, and
every co-B\"uchi language is accepted by an automaton that is ``eventually safety''.
Floating automata have both these properties:
$L$-congruent-prefix invariance and the safety acceptance condition.
They also track the current suffix language wrt.\ $L$, aiding the correctness proofs.
Finally,
computing an HD-tNCA from a floating automaton can be done in polynomial time.
We now define the model.

%

Given an SLTM $\mc M = (\Sigma,S,s_0,\delta^S)$ for some omega-regular language,
a deterministic \emph{floating automaton} (\DFA) is a tuple
$\mc F = (\Sigma, Q, \delta, f)$
that consists of
an alphabet $\Sigma$,
a set of states $Q$,
a partial transition function $\delta : Q \x \Sigma \pto Q$, and
a labelling function $f : Q \to S$ that satisfies
$f(\delta(q,x)) = \delta^S(f(q),x)$,
for every $q \in Q$ and $x \in \Sigma$.
Thus,
every state of $\mc F$ is labelled by a single state of the SLTM, and
the labelling is consistent with the SLTM's behaviour.
Note that \DFAs have no initial state;
their runs will later be defined with the aid of the SLTM.

A \emph{path} of $\mc F$
on a finite or infinite (suffix) word $w = x_1 x_2 \ldots$
is a finite or infinite sequence of transitions
$\pi = (q_1,x_1,q_2) (q_2,x_2,q_3) \ldots$
satisfying
$q_{i+1}=\delta(q_i,x_i)$ for every $i \in \{1,\ldots,|w|\}$ for finite paths and for all $i \geq 1$ otherwise.
Sometimes we omit the letters $x_i$ and write the path as $q_1 q_2 \ldots$.
A word may induce no path, a single path, or multiple paths depending on the choice of the initial path state.

Given a finite or infinite word $w = x_1 x_2 \ldots$,
suppose there is a moment $m\in\{1,\ldots,|w|\}$ (for finite $w$) or $m \geq 1$ (when $w$ is infinite)
and a path
$(q_m,x_m,q_{m+1}) (q_{m+1},x_{m+1},q_{m+2}) \ldots$ on the suffix $x_m x_{m+1} \ldots$
starting in state $q_m$ such that $f(q_m) = \delta^S(s_0,x_1\ldots x_{m-1})$;
this path is called a \emph{run} of the \DFA for the original word.
The \emph{length} of a run is the length of the path;
it is either the infinity or $|w|-m+1$.
A word can have several runs, due to the different possible choices of the moment $m$ and the initial path state.

A \DFA $\mc F$ \emph{accepts} an infinite word $x_1 x_2 \ldots$
if
it has an infinite run for it.
Intuitively,
a word is accepted by $\mc F$
if there is a finite run in the SLTM that ends at some moment $m$ in some state $s$,
then the run jumps into a state of $\mc F$ labelled with $s$,
and evolves in $\mc F$ forever.
The \emph{language} $L(\mc F)$ is the set of accepted infinite words.

The definition of nondeterministic floating automata (\NFA) is derived from that of $\text{\DFAs}$ in the standard way,
using a transition relation $\delta \subseteq Q \x \Sigma \x Q$ instead of a transition function.


Observe that we can always remove transient states and transitions (those that do not belong to any SCC) from a floating automaton
without changing its language.
Thus, a floating automaton can be viewed as a collection of disjoint maximal SCCs.

Figure~\ref{fig:FlN} shows example \DFAs wrt.\ the SLTM in Figure~\ref{fig:SLTM}
(the \DFAs in the figure are also \NFAs constructed by Definition~\ref{def:FlN},
 but that is irrelevant here).

\begin{lemma} \label{lem:DFA-to-HD-tNCA}
  Every \DFA can be translated into an HD-tNCA recognizing the same language in polynomial time
  (the input includes the SLTM).
\end{lemma}

\begin{proof}
  We translate the \DFA $\mc F$ to a HD-tNCA $\mc C$ recognizing the same language as follows.
  The states of $\mc C$ are the union
  of the states $S$ of the SLTM $(\Sigma,S,s_0,\delta^S)$ and
  of the states $Q^F$ of the \DFA $(\Sigma, Q^F\!,\delta^F\!,f)$.
  The initial state is the initial state of the SLTM.
  The accepting transitions are all transitions of the \DFA.
  We add rejecting transitions of four kinds:
  transitions between SLTM states;
  transitions from SLTM states to \DFA states;
  transitions between \DFA states; and
  transitions from \DFA states to SLTM states.
  The transitions of the SLTM are added to $\mc C$ as rejecting transitions.
  We add a rejecting transition $s \trans{x} q^F$
  for every $s \in S$, $q^F \in Q^F\!$, $x \in \Sigma$
  such that
  $s \trans{x} f(q^F)$ is a transition of the SLTM.
  We add a rejecting transition $q^F \trans{x} q'^F$
  for every $q^F\!\!,q'^F\in Q^F$ and letter $x$
  such that
  $f(q^F) \trans{x} f(q'^F)$ is a transition of the SLTM.
  Finally,
  add a rejecting transition
  $q^F \trans{x} \delta^S(f(q^F),x)$,
  for every $q^F$ and letter $x$.
  It is straightforward to see that $L(\mc F) = L(\mc C)$.

  We now describe a strategy to resolve nondeterminism in $\mc C$.
  First, let $\leq$ be a total order on run prefixes of $\mc C$
  that orders the run prefixes according to their length,
  and whenever two prefixes have the same length,
  they are ordered in an arbitrarily but fixed way.
  The strategy, on reading a letter, always picks an accepting transition when possible.
  Observe that $\mc C$ never has a nondeterministic choice between two accepting transitions -- only between rejecting transitions and perhaps a single accepting transition.
  When a nondeterministic choice occurs,
  the strategy picks a successor state $q'$ for which there exists a run prefix $\pi_{rej} \cdot \pi_{acc}$ ending in $q'$ on the current word prefix,
  where $\pi_{rej}$ ends in a rejecting transition and $\pi_{acc}$ continues where $\pi_{rej}$ left and consists solely of accepting transitions,
  and where $\pi_{rej}$ is the smallest wrt.\ the order $\leq$ among such prefixes for all possible successors.
  In other words, the strategy picks a successor for which there exists a run prefix on the current word prefix ending with the longest uninterrupted strike of accepting transitions;
  the draw between equally long prefixes is resolved according to the order fixed beforehand.

  We now prove that, given a word accepted by $\mc C$,
  the strategy generates an accepting run.
  The proof resembles that of \cite[Thm.6]{DBLP:conf/fsttcs/EhlersS22}.
  It consists of two parts.
  First,
  we show that the strategy enumerates run prefixes in strictly increasing order:
  as the run progresses,
  the ordering position of the prefix $\pi_{rej}$ that decides the choice of the successor strictly increases.
  Second,
  we prove that eventually the strategy decides the successor using a prefix of an accepting run.

  We prove the first item.
  Consider two arbitrary moments $m<n$ at which the strategy has to resolve nondeterminism.
  At moment $m$,
  the strategy picks the successor $q_{m+1}$
  for which there exists run prefix $\pi^m_{rej}\cdot \pi^m_{acc}$ of length $m$ on reading the word prefix that ends in $q_{m+1}$,
  where $\pi^m_{acc}$ consists solely of accepting transitions, and $\pi^m_{rej}$ is the smallest among such prefixes for such successor states.
  Similarly,
  we define $\pi^n_{rej}\cdot \pi^n_{acc}$ for the moment $n$.
  We now show by contradiction that $\pi^m_{rej} < \pi^n_{rej}$.
  Having $\pi^m_{rej} = \pi^n_{rej}$ implies that $\pi^m_{acc} \sqsubset \pi^n_{acc}$ (a strict prefix),
  hence the transition at the moment $m$ in the current run must be accepting, but it is rejecting.
  Having the dual case $\pi^m_{rej} > \pi^n_{rej}$ contradicts the strategy choice at the moment $n$.

  We prove the second item.
  Fix a word accepted by $\mc C$.
  For this word,
  there exists an accepting run of the form $\rho^\star = \pi^\star_{rej}\cdot\rho^\star_{acc}$,
  where the infinite run suffix $\rho^\star_{acc}$ consists solely of accepting transitions,
  and the run prefix $\pi^\star_{rej}$ is the smallest among such prefixes.
  We show that eventually the strategy's run $\rho$ simulates $\rho^\star$.
  Consider the moment $|\pi^\star_{rej}|$,
  and suppose the strategy's run does not follow $\rho^\star_{acc}$ at that moment (otherwise we are done).
  There exists a moment $m\geq |\pi^\star_{rej}|$ (pick the earliest),
  when the strategy has to resolve nondeterminism (otherwise, $\pi^\star_{rej}$ would not be the smallest).
  When this happens,
  $\pi^\star_{rej}$ is among the prefixes affecting the choice of the successor.
  If the strategy picks $\pi^\star_{rej}$, we are done.
  If it does not, then it picks $\pi_{rej} < \pi^\star_{rej}$.
  When it visits a rejecting transition later (otherwise, $\pi^\star_{rej}$ would not be the smallest),
  it has to pick again.
  At that later moment, $\pi^\star_{rej}$ is still among the choices while $\pi_{rej}$ is not.
  As previously proven, the new choice has a higher position in the order than $\pi_{rej}$.
  And so on.
  Since there is only finitely many elements between $\pi_{rej}$ and $\pi^\star_{rej}$,
  eventually the strategy picks $\pi^\star_{rej}$ and rerails its run to $\rho^\star$.
\end{proof}

We conclude this section with an observation about the general shape of languages recognized by \NFAs.
Recall from Fact~\ref{fact:simL-implies-simLl} that the languages of individual levels of COCOA $(\mc A^1,\ldots,\mc A^k)$ for $L$ are insensitive to $L$-congruent-prefix replacement:
for every $\ell \in \{1,\ldots,k\}$,
prefixes $p_1,p_2$ s.t.\ $p_1 \sim_L p_2$, and suffix $w$:
$p_1 w \in L(\mc A^\ell) \Impl p_2 w \in L(\mc A^\ell)$.
\NFAs possess a similar property wrt.\ the language of the SLTM.
Fix an omega-regular language $L$.
Given an \NFA with respect to an SLTM for $L$,
any two words that start with $L$-congruent prefixes and share the same suffix,
i.e., $p_1 w$ and $p_2 w$ where $p_1 \sim_L p_2$,
are either both accepted by the \NFA or both rejected.
The reason is that the SLTM reaches the same state after reading $p_1$ and $p_2$.
Hence, if $p_1 w$ has an accepting run in the \NFA,
then we can simulate this run on $p_2 w$.
Thus,
replacing a prefix of a word with any prefix congruent wrt.\ $L$ does not affect whether the word is accepted by the \NFA.

\subsection{Obligation graphs and labelling $\lambda^\mc G$ and $\lambda^\tmcG$ of SLTM states}
\label{sec:obligation-graphs}
\label{sec:graph-labelling}

An \emph{obligation graph} for an omega-regular language $L$ is any tNBA recognizing it.
Given a weak alternating automaton $\mc A$ with $n$ states,
the Miyano-Hayashi construction~\cite{MHConstruction} removes universal transitions and yields an obligation graph for $L(\mc A)$.
Since weak alternating automata can be complemented via dualization,
in the same way we can get an obligation graph for $\overline{L(\mc A)}$.
In both cases, the resulting obligation graph has $2^{O(n)}$ states.

Given an SLTM $(\Sigma,Q^S,q_0^S,\delta^S)$ for an omega-regular language $L$ and an obligation graph $(\Sigma, V, v_0, \delta^\mcG,\alpha)$ for $L$,
the \emph{graph labelling} of the SLTM is a function
$\lambda^\mcG: Q^S \to 2^V$
that maps every $s \in Q^S$ to the set
$\{v \mid \exists p\in \Sigma^*: s = \delta^S(s_0,p) \land v \in \delta^\mcG(v_0,p)\}$.
In other words,
every SLTM state is mapped to the set of vertices reachable on the class of congruent prefixes that the SLTM state encodes.
The graph labelling $\lambda^\mcG$ can be constructed in polynomial time in the size of given SLTM and obligation graph,
by first building the product between the SLTM and the graph, then checking which states are paired together.

\begin{lemma}\label{lem:graph-labelling}
  Given an SLTM and an obligation graph $\mcG$,
  the graph labelling $\lambda^\mcG$ can be constructed in polynomial time.
\end{lemma}

Figures~\ref{fig:NBA}~and~\ref{fig:NBA_neg} show examples of obligation graphs and graph labelling.

\begin{figure}
  \centering
  \begin{minipage}[t]{0.4\linewidth}
    \vspace{0mm}
    \begin{tikzpicture}[->,>=stealth', auto, node distance=15mm]
      \tikzstyle{every state}=[text=black,font=\footnotesize,initial text={},minimum size=5mm, inner sep=0mm]
      \tikzset{
        every edge/.append style = {font=\footnotesize}
      }
      \tikzstyle{marked}=[
        draw=green!38!black,
        postaction={
          decorate,
          decoration={
            markings,
            mark=at position 0.5 with {
              \fill[green!38!black] (0,0) circle[radius=0.4mm];
            }
          }
        }
      ]

      \node[state,initial above] (v0) {$v_0$};
      \node[state] (v1) [right=14mm of v0] {$v_1$};
      \node[state] (v2) [left=8mm of v0] {$v_2$};

      \path[->] (v0) edge [marked,loop,out=240,in=210,looseness=8] node[xshift=0mm,yshift=1mm] {$\bar a b$} (v0)
                (v0) edge [loop,out=285,in=255,looseness=8] node[below] {$\bar a \bar b$} (v0)
                (v0) edge node[above]{$\neg a$} (v2)
                (v2) edge [marked,loop,out=195,in=165,looseness=8] node[left] {$\neg a$} (v2)
                (v0) edge [marked,bend right=6] node[below,yshift=0.8mm]{$a b$} (v1)
                (v0) edge [bend right=40] node[below,yshift=0.9mm]{$a \bar b$} (v1)
                (v1) edge [marked,bend right=6] node[above,yshift=-0.8mm]  {$\bar a  b$} (v0)
                (v1) edge [bend right=40] node[above,yshift=-0.8mm]  {$\bar a  \bar b$} (v0);

      \tikzstyle{rect}=[rectangle, rounded corners, text=black,font=\footnotesize,initial text={},minimum size=3.7mm, inner sep=0.6mm]
    \end{tikzpicture}
  \end{minipage}
  \begin{minipage}[t]{0.3\linewidth}
      \vspace{0mm}
      \footnotesize
      \begin{align*}
        \hspace*{-\mathindent}
        &\textit{Graph labelling of the SLTM:}\\[-0.5mm]
        &\lambda^\mc G(s_0) = \{v_0,v_2\} \\[-0.5mm]
        &\lambda^\mc G(s_1) = \{v_1\} \\[-0.5mm]
        &\lambda^\mc G(s_2) = \{\}
      \end{align*}
  \end{minipage}
  \caption{An obligation graph $\mcG$ (tNBA) recognizing the language of
           $\varphi = \LTLG(a \impl \LTLX \neg a) \land (\LTLGF a \impl \LTLGF b)$,
           and graph labelling $\lambda^\mcG$ wrt.\ the SLTM for $L(\varphi)$ in Figure~\ref{fig:SLTM}.
           Notation such as $a\bar b$ denotes $a\land \neg b$.
           Accepting transitions are marked by a dot.%
           }
  \label{fig:NBA}
\end{figure}
\begin{figure}
  \centering
  \begin{minipage}[t]{0.3\linewidth}
    \vspace{0mm}
    \begin{tikzpicture}[->,>=stealth', auto, node distance=15mm]
      \tikzstyle{every state}=[text=black,font=\footnotesize,initial text={},minimum size=5mm, inner sep=0mm]
      \tikzset{
        every edge/.append style = {font=\footnotesize}
      }
      \tikzstyle{marked}=[
        draw=green!38!black,
        postaction={
          decorate,
          decoration={
            markings,
            mark=at position 0.5 with {
              \fill[green!38!black] (0,0) circle[radius=0.4mm];
            }
          }
        }
      ]

      \tikzstyle{rect}=[rectangle, rounded corners, text=black,font=\footnotesize,initial text={},minimum size=3.7mm, inner sep=0.6mm]

      \node[state,initial left] (nv0) {$\tilde v_0$};
      \node[state] (nv1) [above right =4mm of nv0] {$\tilde v_1$};
      \node[state] (nv2) [right=7mm of nv1] {$\tilde v_2$};
      \node[state] (nv3) [below right =4mm of nv0] {$\tilde v_3$};

      \path[->]
           (nv0) edge [loop,out=75,in=105,looseness=8] node[above,xshift=0mm,yshift=0mm] {$*$} (nv0)
           (nv0) edge node[below,xshift=1mm,yshift=0.7mm] {$a$} (nv1)
           (nv1) edge node[below,xshift=0mm,yshift=0.5mm] {$a$} (nv2)
           (nv2) edge [marked,loop,out=-15,in=15,looseness=8] node[right,xshift=0mm,yshift=0mm] {$*$} (nv2)
           (nv0) edge node[above,xshift=1mm,yshift=-0.7mm] {$*$} (nv3)
           (nv3) edge [loop,out=45,in=75,looseness=8] node[right,xshift=0mm,yshift=0mm] {$\bar a \bar b$} (nv3)
           (nv3) edge [marked,loop,out=-15,in=15,looseness=8] node[right,xshift=0mm,yshift=0mm] {$a \bar b$} (nv3)
      ;



    \end{tikzpicture}
  \end{minipage}
  \begin{minipage}[t]{0.3\linewidth}
      \vspace{-2mm}
      \footnotesize
      \begin{align*}
        \hspace*{-\mathindent}
        &\textit{Graph labelling of the SLTM:}\\[-0.5mm]
        &\lambda^{\tilde {\mc G}}(s_0) = \{\tilde v_0,\tilde v_3\} \\[-0.99mm]
        &\lambda^{\tilde {\mc G}}(s_1) = \{\tilde v_0,\tilde v_1,\tilde v_3\} \\[-0.99mm]
        &\lambda^{\tilde {\mc G}}(s_2) = \{\tilde v_0,\tilde v_2,\tilde v_3\}
      \end{align*}
  \end{minipage}
  \caption{An obligation graph $\tmcG$ (tNBA) for the negation $\neg \varphi$ of the formula used in Figure~\ref{fig:NBA}, and the graph labelling $\lambda^\tmcG$ of the SLTM for $L(\neg\varphi)$ in Figure~\ref{fig:SLTM}.
           Notation such as $a\bar b$ denotes $a\land \neg b$.
           Accepting transitions are marked by a dot.%
           }
  \label{fig:NBA_neg}
\end{figure}

\subsection{Incremental construction of a \DFA for the COCOA level-$\ell$ language}
\label{subsec:FCW_Al}

This section describes a procedure to construct an DfA for the $\ell$th level of the COCOA,
given an DfA for the previous level.
We first construct a nondeterministic floating automaton $\mc F^\ell_N$,
prove that it indeed recognizes the level-$\ell$ of the COCOA,
then determinize it into $\mc F^\ell$.

Fix an omega-regular language $L$,
and let $\mc M = (\Sigma,S,s_0,\delta^S)$ be an SLTM for $L$.

Let $\mc F^0$ be an DfA wrt.\ $\mc M$ recognizing $\Sigma^\omega$;
it is not part of the COCOA.

Fix $\ell \geq 1$.
We use the following notation.
When $\ell$ is even,
$\Gl$ and $\lambda^\Gl$ denote an obligation graph (NBA) and the vertex labelling for $L$ wrt.\ $\mc M$;
when $\ell$ is odd -- for $\bar L$.
The NfA $\mc F^{\ell}_N$ is defined inductively as a product of DfA $\mc F^{\ell-1}$ and NBA $\Gl$
with a specially defined acceptance condition.

\begin{definition}[NfA $\mc F^\ell_N$]
  \label{def:FlN}
  Given an SLTM $\mc M = (\Sigma,S,s_0,\delta^S)$,
  the obligation graph $\Gl = (\Sigma, V,v_0,\delta^\Gl\!,\alpha^\Gl)$ and the graph labelling $\lambda^\Gl$, and
  a DfA $\mc F^{\ell-1}\! = (\Sigma,Q^{\ell-1}\!,\delta^{\ell-1}\!, f^{\ell-1})$ wrt.\ $\mc M$,
  the \NFA $\mc F^\ell_N = (\Sigma,Q^\ell\!,\delta^\ell\!,f^\ell)$ is constructed in two steps.

  \smallskip
  \noindent
  Step 1. Define the transition structure as a product of DfA $\Fellmo$ and NBA $\Gl$:
  \li
  \- $Q^\ell = \{ (q^{\ell-1}\!,v) \mid q^\ellmo \in Q^\ellmo\!, v \in \lambda^\Gl(f^{\ell-1}(q^{\ell-1})) \}$,\\
     i.e.,
     $q^\ellmo$ is paired with every vertex $v$ that can be reached on the same prefix as $q^\ellmo$.

  \- $f^\ell((q^{\ell-1}\!,v)) = f^{\ell-1}(q^{\ell-1})$ for every $(q^{\ell-1}\!,v) \in Q^\ell$.

  \- For every $(q^{\ell-1}\!, v) \in Q^\ell$ and letter $x$:~
      $\delta^\ell((q^{\ell-1}\!,v),x) ~=~
      \{\delta^{\ell-1}(q^{\ell-1}\!,x)\} \x \delta^\Gl(v,x)$.
  \il

  \vspace{-0.8mm}
  \noindent
  Step 2. Define acceptance:
  \li
  \- Decompose the transition structure into maximal SCCs
     $\{(Q_1^\ell,\delta_1^\ell), \ldots, (Q_n^\ell,\delta_n^\ell) \}$.

  \- The states and transitions outside of the SCCs are removed from $Q^\ell$ and $\delta^\ell$.

  \- Consider an SCC $(Q^\ell_i,\delta^\ell_i)$.
     We keep the SCC $(Q^\ell_i,\delta^\ell_i)$ in $\mc F^{\ell}_N$
     if $\delta^\ell_i$ contains a transition $(q^{\ell-1}\!,v) \trans{x} ({q^\ellmo}'\!,v')$
     s.t.\ $v\trans{x} v'$ is a $\Gl$-accepting transition,
     otherwise we remove the SCC from $\mc F^{\ell}_N$.
  \il
\end{definition}

\noindent We give an intuition.
\li
\- The purpose of the product is to check the condition
   $\forall J.\exists \text{SLII}^L_{(J,w)} w' \in L(\mc F^\ellmo) \cap L(\Gl)$.

\- By using $\Fellmo$ in the product,
   we consider only the words of $\Fellmo$.

\- The product of \DFA $\Fellmo$ with NBA $\Gl$ pairs together the states and vertices that can be reached on the same prefix, which is done with the aid of the graph labelling function $\lambda^\Gl$.
   Recall that $\lambda^\Gl$ maps an SLTM state to the set of vertices of $\Gl$
   reachable on some prefix of the prefix-congruence class that the SLTM state represents.
   The resulting floating automaton of Step 1 -- if we require its run to also satisfy the B\"uchi acceptance condition of the NBA component -- checks whether a given word $w$ belongs to $L(\Fellmo)\cap L(\Gl)$.
   We, however, need to check $\forall J.\exists \text{SLII}^L_{(J,w)} w' \in L(\mc F^\ellmo) \cap L(\Gl)$,
   which is achieved in Step 2.

\- In Step 2, we keep only those SCCs that contain a transition whose NBA projection is accepting.
   If an SCC contains such a transition,
   then on reaching the SCC,
   when a moment from $J$ arrives,
   by inserting a finite word,
   we can visit that NBA accepting transition and come back.
   Since the states of the product track the SLTM state,
   returning to where we started means that the inserted finite word preserves the suffix language.
   We inject such words for every moment in $J$.
   Thus, if a word $w$ ends in such an SCC,
   then it satisfies $\forall J.\exists \text{SLII}^L_{(J,w)} w' \in L(\mc F^\ellmo) \cap L(\Gl)$
   (Lemma~\ref{lem:Ll_includes_AlN}).
   Consider the other direction,
   and suppose a word $w$ satisfies the for-all-$J$ condition.
   We pick a set $J$ of points separated sufficiently far apart.
   There exists an SLII$^L_{(J,w)}$ $w'$ having an accepting run in both $\Fellmo$ and $\Gl$,
   hence $w'$ has an accepting run in $\mc F^\ell_N$,
   thus $w'$ ends in an SCC preserved by Step 2.
   Since the injection points are far apart,
   the run of $\mc F^\ell_N$ visits a certain state twice
   while reading a part of $w'$ between the injections, i.e., a part of the original $w$.
   By looping the run between those visits,
   we create an infinite and hence accepting run of $\Fl_N$ on the original $w$.
   This direction constitutes Lemma~\ref{lem:AlN_includes_Ll}.
\il

\begin{lemma}\label{lem:Ll_includes_AlN}
  For each level index $\ell$ of the COCOA for $L$:
  \begin{equation*}
    L(\mc F^\ell_N)
    ~\subseteq~
    \big\{w \in L(\mc F^{\ellmo}) \mid \forall J.\,\exists \textnormal{SLII}^{L}_{(J,w)} w': w' \in L(\mc F^{\ell-1}) \cap L(\Gl)\big\}.
  \end{equation*}
\end{lemma}

\begin{proof}
  Consider a word $w = x_0 x_1 \ldots$ accepted by $\mc F^\ell_N$.
  In what follows, for a given index set $J$,
  we construct an SLII $w'$ accepted by both $\Gl$ and $\mc F^{\ell-1}$.

  Since $w \in L(\mc F^\ell_N)$,
  there exist
  a moment $m$ and
  an infinite path $\rho^\ell_{w,m} = q^\ell_m q^\ell_{m+1}\ldots$ of $\mc F^\ell_N$ on $x_m x_{m+1} \ldots$
  such that
  $f^\ell(q^\ell_m) = \delta^S(s_0,x_0 \ldots x_{m-1})$.
  Let $q^\ell_J$ be an arbitrary state that appears infinitely often in this witness path.
  Let $J' \subseteq J$ be the subset of moments of $J$ that are greater than $m$.
  By the definition of $\mc F^\ell_N$,
  the SCC containing $q^\ell_J$ also contains a transition
  $\tau^\ell = ( q^{\ell-1}\!, v) \trans{y} ({\hat q^{\ell-1}}\!,\hat v)$,
  where $ v \trans{y} \hat v$ is a $\Gl$-accepting transition.
  There exists a finite path $\pi$ on some finite word $u$
  that starts in the state $q^\ell_J$,
  goes through the transition $\tau^\ell$,
  and returns to $q^\ell_J$.
  Notice that the SLTM label is the same at the beginning and the end of this loop.
  Therefore, extending any prefix word
  whose path ends in $q^\ell_J$
  by $u$ preserves the suffix languages wrt.\ $L$ and $\overline{L}$.
  Let $w'$ be the result of inserting $u$ into $w$ at each moment in $J'$ and nothing at moments in $J\setminus J'$;
  then $w' \in \text{SLII}^{L}_{(J,w)}$.
  Let $\rho^\ell_{w'\!,m}$ be the infinite path of $\mc F^\ell_N$ obtained by inserting $\pi$ into $\rho^\ell_{w,m}$ at each moment in $J'$:
  the infinite path $\rho^\ell_{w'\!,m}$ and the moment $m$ witness the acceptance of $w'$ by $\mc F^\ell_N$.
  Hence,
  by the definition of $\mc F^\ell_N$,
  the $F^{\ell-1}$-projection of $\rho^\ell_{w'\!,m}$ and the moment $m$ witness the acceptance of $w'$ by $\mc F^{\ell-1}$
  (note that $\delta^S(s_0,x_0,\ldots,x_{m-1}) = f^\ell(q^\ell_m) = f^{\ell-1}(q^{\ell-1}_m)$,
   where $q^\ell_m = (q^{\ell-1}_m\!,v_m)$).
  Finally, we show that $w' \in L(\Gl)$.
  Since $v_m \in \lambda^\Gl(f^\ell(q^\ell_m))$,
  there is a prefix $p$
  such that
  $p \sim_{L} x_0\ldots x_{m-1}$ and $v_m$ is reached from $v_0$ on reading $p$.
  The suffix $s$ of $w'$ starting at moment $m$ has an infinite path in $\Gl$ that visits an accepting transition (namely, $v \trans{y} \hat v$) infinitely often,
  hence $p \cdot s \in L(\Gl)$, thus $\Gl$ also accepts $x_0\ldots x_{m-1} \cdot s = w'$.
\end{proof}

\begin{lemma}\label{lem:AlN_includes_Ll}
  Fix a level index $\ell$ of the COCOA for $L$.
  Under the assumption that \DFA $\mc F^{\ellmo}$ represents the language of the level $(\ellmo)$ of the COCOA, or $\Sigma^\omega$ for $\ell=1$, we have:
  \begin{equation*}
    \big\{w \in L(\mc F^{\ell-1})\mid \forall J.\,\exists \textnormal{SLII}^{L}_{(J,w)} w': w' \in L(\mc F^{\ell-1}) \cap L(\Gl)\big\}
    ~\subseteq~
    L(\mc F^\ell_N).
  \end{equation*}
\end{lemma}

\begin{proof}
  First, we observe that the left and the right part of the lemma equation are omega-regular languages.
  The language $L(\Dl_N)$ is omega-regular since $\Dl_N$ is an \NFA.
  Since $\mc F^{\ell-1}$ represents the language of the level $(\ell{-}1)$ of the COCOA,
  the left part of the lemma equation represents exactly the language of the level $\ell$ of the COCOA,
  which is omega-regular.

  The proof of the lemma reasons about ultimately-periodic words.
  A word is \emph{ultimately-periodic} if it is of the form $uv^\omega$ for $u,v\in\Sigma^*$.
  Given a language $L$,
  let $L^\mathit{up}$ denote the set of all ultimately-periodic words of $L$.

  We now state three claims, then proceed to the proof.

  \smallskip
  \noindent
  \emph{Claim 1 \cite[Fact 1]{10.1007/3-540-58027-1_27}:}
  For every two omega-regular languages $L_1$ and $L_2$:
  $L_1^\mathit{up} \subseteq L_2^\mathit{up} \Rightarrow L_1 \subseteq L_2$.

  \smallskip
  \noindent
  \emph{Claim 2:}
  Let $w$ be an ultimately-periodic word and let $\mc F$ be an \NFA.
  Suppose that, for every $l \in \bbN$,
  $\mc F$ has a run of length at least $l$ on some prefix of $w$.
  Then $w \in L(\mc F)$.

  \noindent
  Proof:
  Fix a word $w = uv^\omega$.
  Choose $l = |u| + |v|\cdot|Q^\mc F| + 1$,
  where $|Q^\mc F|$ is the number of states in $\mc F$.
  Consider a run $\rho$ of $\mc F$ of length at least $l$ on some prefix of $w$.
  By the pigeonhole principle,
  on reading the first letter of $v$ some state repeats.
  Let $j<k$ be the moments of two occurences of that state.
  We construct an infinite run of $\mc F$ on $w$
  by following $\rho$ up to the moment $j$ and then looping the transitions from $j$ till $k$.

  \smallskip
  \noindent
  \emph{Claim 3:} Every word $w \in L(\Gl)\cap L(\mc F^{\ell-1})$ is accepted by $\mc F^\ell_N$.

  \noindent
  Proof:
  Fix a word $w=x_1 x_2 \ldots \in L(\Gl)\cap L(\mc F^\ellmo)$.
  Let $m$ be a moment and $q^\ellmo_m q^\ellmo_{m+1} \ldots$ be an infinite run of $\mc F^\ellmo$
  witnessing the acceptance of $w$.
  Let $\rho^\Gl = v_1 v_2 \ldots$ be an accepting run of $\Gl$ on $w$.
  It is easy to see that the infinite sequence
  $(q^\ellmo_m\!, v_m) \trans{x_m} (q^\ellmo_{m+1}, v_{m+1}) \trans{x_{m+1}} \ldots$
  forms successive transitions of $\mc F^\ell_N$,
  hence $\mc F^\ell_N$ has an infinite run on $w$,
  so $w \in L(\mc F^\ell_N)$.

  \smallskip
  We are ready to prove the lemma.
  By Claim 1, we focus on ultimately-periodic words.
  \li
  \newcommand\tm{{\tilde m}}
  \- Fix an ultimately-periodic word $w$
     such that
     $\forall J.\exists \textnormal{SLII}^L_{(J,w)} \tilde w: \tilde w \in L(\mc F^{\ell-1}) \cap L(\Gl)$.

  \- Take an arbitrary $l \in \bbN$. Fix some $J$ whose consecutive indices are separated by $l$ steps.

  \- There exists $\tilde w = \tilde x_0 \tilde x_1 \ldots$ as mentioned above.

  \- By Claim 3, $\tilde w \in L(\mc F^\ell_N)$.
     Let $\tilde n$ be a moment and $q^\ell_{\tilde n} \trans{\tilde x_{\tilde n}} q^\ell_{\tilde n+1} \trans{\tilde x_{\tilde n+1}} \ldots$ an infinite run
     witnessing the acceptance of $\tilde w$ by $\mc F^\ell_N$.
     Since the indices in $J$ are separated by $l$ steps,
     there exists a moment $\tm \geq \tilde n$ from which the infinite run reads the original letters (not insertions) for $l$ steps.
     This corresponds to the transitions $q^\ell_{\tm} \trans{\tilde x_\tm}  \ldots \trans{\tilde x_{\tm-1+l}} q^\ell_{\tm+l}$.

  \- Let $\tilde p = \tilde x_0 \ldots \tilde x_{\tm-1}$.
     Let $p = x_0 \ldots x_{m-1} $ be the corresponding prefix of $w$,
     derived from $\tilde p$ by removing the suffix-language-preserving insertions which were inserted into $w$ to get $\tilde w$.
     Note that $\tilde x_{\tm-1+i} = x_{m-1+i}$ for all $i \in \{1,\ldots,l\}$.

  \- Since the previously mentioned run for $\tilde p \cdot \tilde x_\tm \ldots \tilde x_{\tm-1+l}$ starts in $q^\ell_\tm$
     and $\tilde p \sim_{L} p$,
     a run for $p \cdot x_\tm \ldots x_{\tm-1+l}$ can also start from $q^\ell_\tm$.

  \- For the next $l$ steps no insertions are performed,
     so the sequence $q^\ell_\tm \trans{x_m} \ldots \trans{x_{m+l-1}} q^\ell_{\tm+l}$ is actually a run of the length $l$ of $\mc F^\ell_N$ on the prefix $x_0\ldots x_{m-1+l}$ of $w$,
     starting at the moment $m$.
     We constructed a run required by Claim~2, so $w$ is accepted by $F^\ell_N$.
  \il
\end{proof}

\begin{example}\label{example:def:FlN}
  Consider the language of
  $\varphi = \LTLG(a \impl \LTLX \neg a) \land (\LTLGF a \impl \LTLGF b)$
  over the alphabet $\Sigma=2^{\{a,b\}}$.
  Its SLTM is in Figure~\ref{fig:SLTM},
  obligation graphs $\mc G_\varphi$ for $L(\varphi)$ and $\mc G_{\neg\varphi}$ for $L(\neg\varphi)$ are in Figures~\ref{fig:NBA}~and~\ref{fig:NBA_neg}.
  Figure~\ref{fig:FlN} shows the \DFA $\mc F^0$ recognizing $\Sigma^\omega$,
  as well as $\mc F^1_N$ and $\mc F^2_N$ constructed according to Definition~\ref{def:FlN}.
  Let us go through the construction of $\mc F^1_N$ and $\mc F^2_N$.
  The \NFA $\mc F^1_N$ is constructed as the product of the \DFA $\mc F^0$ and the NfBA from Figure~\ref{fig:NBA_neg}
  plus the acceptance-condition analysis.
  Since Step 2 of Definition~\ref{def:FlN} removes transient states and transitions,
  as well as SCCs that do not have a $\mc G$-accepting transition,
  it suffices to consider two SCCs of $\mc G_{\neg\varphi}$:
  the SCC of $\tilde v_2$ and the SCC of $\tilde v_3$.
  Since $\tilde v_2 \in \lambda^{\mc G_{\neg\varphi}}(s_2)$,
  we have $(s_2, \tilde v_2) \in Q^1_N$.
  In the product,
  the state $(s_2, \tilde v_2)$ has a self-loop on every letter.
  Step 2 keeps this SCC in $\mc F^1$,
  because it contains a transition whose projection is accepting in $\mc G_{\neg\varphi}$.
  Now consider vertex $\tilde v_3$.
  Pairing $\tilde v_3$ with all three states of $\mc F^0$
  yields two SCCs in the product,
  and each contains a transition whose projection is accepting in $\mc G_{\neg\varphi}$.
  However,
  we can remove the SCC containing $(s_2,\tilde v_3)$,
  since its language is subsumed by the language of the SCC of $(s_2,\tilde v_2)$.
  Now to the \NFA $\mc F^2_N$.
  The \NFA $\mc F^1_N$ is determintistic,
  so we construct the \NFA $\mc F^2_N$ as the product of $\mc F^1_N$ and $\mc G_{\varphi}$ from Figure~\ref{fig:NBA}.
  The state $(s_2,\tilde v_2)$ of $\mc F^1_N$ cannot be paired with any vertex of $\mc G_{\varphi}$,
  because $f^1_N((s_2,\tilde v_2))=s_2$ and $\lambda^{\mc G_{\varphi}}(s_2) = \emptyset$.
  The state $(s_0,\tilde v_3)$ can be paired with vertices $v_0$ and $v_2$,
  since $f^1_N((s_0,\tilde v_3))=s_0$ and $\lambda^\mc G(s_0) = \{v_0,v_2\}$.
  After Step 2, only the state $(s_0,\tilde v_3,v_2)$ remains,
  yielding $\mc F^2_N$ in the figure.
  Finally, applying Definition~\ref{def:FlN} to $\mc F^2_N$ (which is deterministic) and $\mcG_{\neg\varphi}$ yields the empty-language automaton.
\end{example}

\begin{figure}
  \centering
  \begin{minipage}[t]{0.28\linewidth}
    \vspace{0.7mm}

    \begin{tikzpicture}[->,>=stealth', auto, node distance=15mm,every node/.style={font=\footnotesize}]
      \tikzstyle{every state}=[rectangle, rounded corners, text=black,font=\footnotesize,initial text={},minimum size=4mm, inner sep=0.6mm]
      \tikzset{
        every edge/.append style = {font=\footnotesize}
      }
      \tikzstyle{marked}=[
        draw=green!38!black,
        postaction={
          decorate,
          decoration={
            markings,
            mark=at position 0.5 with {
              \fill[green!38!black] (0,0) circle[radius=0.4mm];
            }
          }
        }
      ]

      \newcommand\StLa[1]{\scalebox{0.8}{(}\hspace{-0.1mm}#1\hspace{-0.2mm}\scalebox{0.8}{)}}
      \node[state] (s0) {$s_0$};
      \node[state] (s1) [right of=s0] {$s_1$};

      \node[state] (s2) [right=0.6cm of s1] {$s_2$};

      \path[->] (s2) edge [loop,out=75,in=105,looseness=10] node[above,xshift=0mm,yshift=-0.5mm] {$*$} (s2)
                (s0) edge [loop,out=75,in=105,looseness=10] node[above,yshift=-0.5mm] {$\neg a$} (s0)
                (s0) edge [bend right=8] node[below,yshift=0.8mm]{$a$} (s1)
                (s1) edge [bend right=8] node[above,yshift=-0.8mm]{$\neg a$} (s0);

      \begin{pgfonlayer}{background}
        \coordinate (topPad) at ([yshift=7.2mm]s0.north);
        \coordinate (bottomPad) at ([yshift=-3.2mm]s0.south);
        \coordinate (leftPad) at ([xshift=-4.5mm]s0.west);
        \coordinate (rightPad) at ([xshift=4.2mm]s2.east);
        \node[fill=gray!15, draw=none, rounded corners, fit=(topPad)(leftPad)(rightPad)(bottomPad), inner sep=0mm] {};
      \end{pgfonlayer}

    \end{tikzpicture}
    \label{fig:FlN:F0}
  \end{minipage}\hspace{3mm}
  \begin{minipage}[t]{0.41\linewidth}
    \vspace{0mm}
    \begin{tikzpicture}[->,>=stealth', auto, node distance=15mm,every node/.style={font=\footnotesize}]
      \tikzstyle{every state}=[rectangle,rounded corners,text=black,font=\footnotesize,initial text={},minimum size=4mm, inner sep=0.6mm]
      \tikzset{
        every edge/.append style = {font=\footnotesize}
      }
      \tikzstyle{marked}=[
        draw=green!38!black,
        postaction={
          decorate,
          decoration={
            markings,
            mark=at position 0.5 with {
              \fill[green!38!black] (0,0) circle[radius=0.4mm];
            }
          }
        }
      ]

      \newcommand\StLa[1]{\scalebox{0.8}{(}\hspace{-0.1mm}#1\hspace{-0.2mm}\scalebox{0.8}{)}}
      \node[state] (s0) {$(s_0,\tilde v_3)$};
      \node[state] (s1) [right=0.9cm of s0] {$(s_1,\tilde v_3)$};

      \node[state] (s2) [right=0.6cm of s1] {$(s_2,\tilde v_2)$};

      \path[->] (s2) edge [loop,out=75,in=105,looseness=10] node[above,xshift=0mm,yshift=-0.5mm] {$*$} (s2)
                (s0) edge [loop,out=75,in=105,looseness=10] node[above,yshift=-0.5mm] {$\bar a\bar b$} (s0)
                (s0) edge [bend right=8] node[below,yshift=0.8mm]{$a \bar b$} (s1)
                (s1) edge [bend right=8] node[above,yshift=-0.8mm]{$\bar a\bar b$} (s0);

      \begin{pgfonlayer}{background}
        \coordinate (topPad) at ([yshift=7.2mm]s0.north);
        \coordinate (bottomPad) at ([yshift=-3.2mm]s0.south);
        \coordinate (leftPad) at ([xshift=-4.5mm]s0.west);
        \coordinate (rightPad) at ([xshift=4.2mm]s2.east);
        \node[fill=gray!15, draw=none, rounded corners, fit=(topPad)(leftPad)(rightPad)(bottomPad), inner sep=0mm] {};
      \end{pgfonlayer}

    \end{tikzpicture}
    \label{fig:FlN:F1}
  \end{minipage}\hspace{3mm}
  \begin{minipage}[t]{0.18\linewidth}
    \vspace{0mm}
    \begin{tikzpicture}[->,>=stealth', auto, node distance=15mm,every node/.style={font=\footnotesize}]
      \tikzstyle{every state}=[rectangle, rounded corners, text=black,font=\footnotesize,initial text={},minimum size=3.7mm, inner sep=0.6mm]
      \tikzset{
        every edge/.append style = {font=\footnotesize}
      }
      \tikzstyle{marked}=[
        draw=green!38!black,
        postaction={
          decorate,
          decoration={
            markings,
            mark=at position 0.5 with {
              \fill[green!38!black] (0,0) circle[radius=0.4mm];
            }
          }
        }
      ]

      \node[state] (s02) {$(s_0,\tilde v_3, v_2)$};

      \path[->]
           (s02) edge [loop,out=75,in=105,looseness=10] node[above,xshift=0mm,yshift=-0.5mm] {$\bar a \bar b$} (s02);

      \begin{pgfonlayer}{background}
        \coordinate (topPad) at ([yshift=7.2mm]s02.north);
        \coordinate (bottomPad) at ([yshift=-3.2mm]s02.south);
        \coordinate (leftPad) at ([xshift=-4.5mm]s02.west);
        \coordinate (rightPad) at ([xshift=4.2mm]s02.east);
        \node[fill=gray!15, draw=none, rounded corners, fit=(topPad)(leftPad)(rightPad)(bottomPad), inner sep=0mm] {};
      \end{pgfonlayer}

    \end{tikzpicture}
  \end{minipage}
  \caption{\DFA $\mc F^0$, and \NFAs $\mc F^1_N$ and $\mc F^2_N$ (they are actually deterministic)
           constructed according to Definition~\ref{def:FlN}.
           All automata are wrt.\ the SLTM from Figure~\ref{fig:SLTM}.
           $\mc F^0$ recognizes the language $\Sigma^\omega$,
           its states are exactly the states of the SLTM,
           and the SLTM labelling $f^0_N$ is the identity function.
           $\mc F^1_N$ recognizes the language of $\neg \LTLG(a \impl \LTLX \neg a) \lor \LTLFG \neg b$.
           $\mc F^2_N$ recognizes the language of
           $\LTLG(a \impl \LTLX \neg a) \land \LTLFG (\neg a \land \neg b)$.
           The labellings $f^1_N$ and $f^2_N$ return the first component of a given state.}
  \label{fig:FlN}
\end{figure}

\subparagraph*{Determinization.}
We now show how to determinize \NFA $\Dl_N$ into a \DFA.
Recall that the nondeterminism of \NFAs has three sources.
First,
deciding when to jump from an SLTM state into a state of the \NFA.
Second,
deciding on the entry state -- into which state of the NFA -- to jump right after leaving the SLTM.
Third,
when already in a state of the \NFA, deciding on a successor when reading a letter.
\DFAs also have nonderministic nature, but in general only of the first and the second kind.
We describe a determinization construction that keeps nondeterminism of the first kind
but removes the second and third reasons.
Note that the LDL-to-COCOA translation of this section naturally handles the first and second kinds of nondeterminism of general-form \DFAs ~--
it is the LDL-to-DPA translation of Section~\ref{sec:LDL-to-DPA} that requires
the entry point to be deterministic.
The idea is to start from a set of entry states instead of a single one,
and apply the standard subset construction.
Let
$\mc F^\ell_N = (\Sigma,Q^\ell_N,\delta^\ell_N,f^\ell_N)$
be the \NFA constructed from $\mc F^\ellmo$ according to Definition~\ref{def:FlN}.
\newcommand\Qlifted{Q^\ellmo_\mathit{lifted}} %
Let
$\Qlifted = \{q^\ellmo \mid \exists v:(q^\ellmo\!,v) \in Q^\ell_N\}$
be the states of $\mc F^\ellmo$ that appear as a component of states of $\mc F^\ell_N$.
For every $q^\ellmo \in \Qlifted$,
define $\Vinit_{q^\ellmo} = \{v \mid (q^\ellmo,v) \in Q^\ell_N\}$.
The set $Q_\mathit{enter}^\ell = \{(q^\ellmo,\Vinit_{q^\ellmo})\mid q^\ellmo \in \Qlifted\}$
is called the set of \emph{entry} states.
Note that each entry state is from $Q^\ellmo \x 2^V$
and can be seen as a set of states of $\mc F^\ell_N$ sharing the same $\mc F^\ellmo$ state.
Since each entry state fixes the $\mc F^\ellmo$ component and $\mc F^\ellmo$ is deterministic,
we can apply the subset construction wrt.\ $\delta^\ell_N$ starting from a given entry state.

\begin{definition}[from \NFA $\mc F^\ell_N$ to \DFA $\mc F^\ell$]\label{def:FlN-to-Fl}
  Let \NFA $\mc F^\ell_N = (\Sigma,Q^\ell_N,\delta^\ell_N,f^\ell_N)$ be constructed according to Definition~\ref{def:FlN}.
  Let $Q^\ell_\mathit{enter}$ be the entry states.
  For every $q_e \in Q^\ell_\mathit{enter}$,
  apply the subset construction wrt.\ $\delta^\ell_N$ starting from $q_e$,
  yielding the following \DFA $\mc F^\ell = (\Sigma, Q^\ell\!,\delta^\ell\!,f^\ell)$.
  The set $Q^\ell \subseteq Q^\ellmo \x 2^V$ of states and $\delta^\ell$ are minimal satisfying:
  \li
  \- $Q^\ell_\mathit{enter} \subseteq Q^\ell$.
  \- Given $(q^\ellmo,V') \in Q^\ell$ and $x \in \Sigma$,
     define
     $\delta^\ell\big((q^\ellmo,V'),x\big) = (q^\ellmo_\mathit{next}, V_\mathit{next})$,
     where
     $\{q^\ellmo_\mathit{next}\} \x V_\mathit{next} = \delta^\ell_N\big(\{(q^\ellmo,v) : v \in V' \},x\big)$.
  \- If $q \in Q^\ell$,
     then $Q^\ell$ also contains all its successors wrt. $\delta^\ell$.
  \il
  Finally,
  the labelling is preserved:
  $f^\ell((q^\ellmo,V')) = f^\ellmo(q^\ellmo)$ for all $(q^\ellmo\!,V') \in Q^\ell$.
\end{definition}

\begin{lemma}\label{lem:NFA-determinization}
  $L(\mc F^\ell) = L(\mc F^\ell_N)$.
\end{lemma}
\begin{proof}
  Follows from the correctness of the subset construction and
  from the fact that $\mc F^\ell$ and $\mc F^\ell_N$ use the labelling functions $f^\ell$ and $f_N^\ell$
  that depend only on the $\mc F^\ellmo$ component.
\end{proof}

\begin{remark}
  The automaton $\mc F^\ell$ may contain transient states and transitions
  (which are not part of any SCC).
  Although such states and transitions can be removed without changing the language of the \DFA,
  this step is omitted in the definition,
  because the LDL-to-DPA translation of the next section
  depends on $\mc F^\ell$ having entry states (which could be outside of any SCC).
  The LDL-to-COCOA translation of this section does not require \DFAs to have special entry states,
  so the removal can be done.
\end{remark}

Figure~\ref{fig:LDLtoDPA:example2:F1} shows an example of determinization
in which applying Definition~\ref{def:FlN-to-Fl} introduces a single new entry state and keeps the other states.

\smallskip
The following lemma states, in particularly, that there is uniform bound on the size of $\mc F^\ell$,
regardless of the level index $\ell$.

\begin{lemma}\label{lem:Fl-size}\label{lem:Fl-uniform-bound}\label{lem:Fl-time}
  Given a weak alternating automaton $\mc A$ with $n$ states,
  an SLTM for $L(\mc A)$,
  obligation graphs $\mcG$ for $L(\mc A)$ and $\tmcG$ for $\overline{L(\mc A)}$,
  their graph labeling of the SLTM states,
  $\ell \in \{1,\ldots,k\}$ where $k$ is the number of levels in the COCOA for $L(\mc A)$, and
  a \DFA $\mc F^\ellmo$ (when $\ell=1$ the \DFA accepts every word),
  consider the following procedure:
  construct \DFA $\mc F^\ell$
  by first creating an \NFA $\mc F^\ell_N$ using Definition~\ref{def:FlN},
  then determinizing it into $\mc F^\ell$ using Definition~\ref{def:FlN-to-Fl}.
  Assuming that $\mc F^\ellmo$ was constructed in the same way,
  the procedure runs in time $\twoexp$\!\! and $\Dl$ has $\twoexp$\!\! states,
  independently of $\ell$.
\end{lemma}

\begin{proof}
  We first prove the claim on the number of states.
  A state of $\mc F^\ell$ has the shape $(q^\ellmo,V^\ell)$,
  where $q^\ellmo$ is a state of $\mc F^\ellmo$ and $V^\ell \subseteq \lambda^\Gl(f^\ellmo(q^\ellmo))$ is a subset of vertices of the obligation graph $\mc G^\ell$
  ($\mc G^\ell$ is $\mcG$ for even $\ell$ and it is $\tmcG$ for odd $\ell$).
  By unfolding the shape and removing nested parenthesis,
  the shape of a state of $\mc F^\ell$ becomes
  $(s,V^1,\ldots,V^\ell)$,
  where
  $s$ is a state of the SLTM and
  $V^i$ is a subset of vertices of the obligation graph $\mc G^i$, for $i\in \{1,\ldots,\ell\}$.
  The number of SLTM states is $\twoexp$\! by Fact~\ref{fact:SLTM},
  the number of obligation-graph vertices is $\oneexp$\!, and
  the number $\ell$ is $\oneexp$\! by Fact~\ref{fact:cocoa-nof-levels}.
  Hence, the number of states of such form is $\twoexp$\!\!.

  Now let us look at the time complexity of the procedure.
  Constructing $\mc F^\ell_N$ from $\mc F^\ellmo$ takes time polynomial in the sizes of
  $\mc F^\ellmo$, $\mc M$, and $\mc G^\ell$.
  Constructing $\mc F^\ell$ from $\mc F^\ell_N$ takes time polynomial in the size of $\mc F^\ell_N$ and exponential in the size of $\mc G^\ell$.
  Therefore, the total running time of the procedure is $\twoexp$\!\!.
\end{proof}

\subsection{Procedure to translate LDL into COCOA}
\label{sec:LDL-to-COCOA:procedure}

\textbf{Procedure \textsc{LDLtoCOCOA}.}
The procedure takes as input an LDL formula or a weak alternating automaton $\mc A$ and
constructs canonical HD-tNCAs $\mc A^1,\ldots,\mc A^k$ of the COCOA:
\lo
\- If the input is an LDL formula $\varphi$,
   translate it into a weak alternating automaton $\mc A$ using~\cite{DBLP:journals/iandc/FaymonvilleZ17}.
\- Construct an SLTM $\mc M$ for $L(\mc A)$ using the algorithm of \cite{angluin2024constructing}
   adapted to weak alternating automata and described in the proof of Fact~\ref{fact:SLTM}.
   Construct obligation graphs for $L(\mc A)$ and $\overline{L({\mc A})}$,
   and the graph labelling functions,
   as described in Section~\ref{sec:graph-labelling}.
\- Construct \DFA $\mc F^0$ wrt.\ $\mc M$ that accepts every word.
\- For $\ell = 1,2,\ldots$, do the following:
   \lo
   \- Construct the \NFA $\mc F^\ell_N$ using Definition~\ref{def:FlN}
      from the previously computed $\mc F^\ellmo$.
   \- Determinize the \NFA $\mc F^\ell_N$ into \DFA ${\mc F^\ell}$ using Definition~\ref{def:FlN-to-Fl}.
   \- Translate $\mc F^\ell$ into an HD-tNCA using the construction from the proof of Lemma~\ref{lem:DFA-to-HD-tNCA}.
   \- Minimize-canonize the HD-tNCA using the construction of~\cite{DBLP:journals/lmcs/RadiK22},
      yielding $\mc A^\ell$.
   \ol
   The procedure stops when $L(\mc A^\ell)=\emptyset$ and
   returns $(\mc A^1,\ldots,\mc A^\ellmo)$,
   or returns $()$ when $L(\mc A^1)=\emptyset$.
\ol

\begin{theorem}
  Procedure \textsc{LDLtoCOCOA} constructs the COCOA $(\mc A^1,\ldots,\mc A^k)$
  from either an LDL formula or a weak alternating automaton in time $\twoexp\!$,
  where $n$ is the size of the formula when the input is a formula and otherwise
  it is the number of automaton states.
\end{theorem}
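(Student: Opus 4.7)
\medskip
\noindent
\textbf{Proof plan.}
The plan is to establish correctness by induction on the level $\ell$, appealing to the chain of lemmas already developed, and then to bound the total running time by summing the per-iteration cost over the number of levels.

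For correctness, I would induct on $\ell$ and maintain the invariant that, after minimization, $L(\mc F^\ell) = L^\ell$, where $L^\ell$ is the language of the $\ell$-th COCOA level of $L(\mc A)$ (as defined on page~\pageref{page:def:cocoa}). The base case $\ell=0$ holds because $\mc F^0$ is a DFW recognizing $\Sigma^\omega = L^0$. For the step, assuming $L(\mc F^\ellmo)=L^\ellmo$, the two inclusions in Lemmas~\ref{lem:Ll_includes_AlN} and~\ref{lem:AlN_includes_Ll} together yield $L(\mc F^\ell_N)=L^\ell$, since the obligation graph $\mc G$ used in Definition~\ref{def:FlN} satisfies $L(\mc G)=\overline{L(\mc A)}$ for odd $\ell$ and $L(\mc G)=L(\mc A)$ for even $\ell$. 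Lemma~\ref{lem:NFW-determinization} gives $L(\mc F^\ell)=L(\mc F^\ell_N)=L^\ell$; DFW minimization (Fact~\ref{fact:DFW-minimization}) preserves this language; Lemma~\ref{lem:DFW-to-HD-NCW} produces an HD-NCW of the same language; and canonization via~\cite{DBLP:journals/lmcs/RadiK22} yields the unique canonical HD-NCW $\mc A^\ell$ required by COCOA. Termination of the outer loop is guaranteed because the COCOA chain has finite length $k$, so eventually $L^{\ell+1}=\emptyset$ and the minimized $\mc A^{\ell+1}$ is empty; nonemptiness of an HD-NCW can be checked in polynomial time during the loop.

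For complexity, I would tally as follows. Step 1 costs $2^{\mathrm{poly}(n)}$ by Fact~\ref{fact:LDL-to-Weak} and produces a WAA with $O(n)$ states. Step 2 (two obligation graphs, one shared canonical SLTM) costs $\twoexp$ by Lemma~\ref{lem:canonical-SLTM}; the resulting SLTM has $\twoexp$ states and the obligation graphs have $\oneexp$ states. For each iteration $\ell$, Definition~\ref{def:FlN} and Definition~\ref{def:det-Fl} produce $\mc F^\ell$ in time polynomial in $|Q^\ellmo|^2\cdot 2^{|V|}\cdot|V|$ by Lemma~\ref{lem:Fl-size}. The critical observation is Corollary~\ref{cor:size-of-Fl}: after DFW minimization, $|Q^\ell|$ is uniformly bounded by $\twoexp$, so the $|Q^\ellmo|$ feeding the next iteration does \emph{not} cascade beyond $\twoexp$. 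Each of the remaining per-level steps (minimization, DFW-to-HD-NCW, HD-NCW canonization) is polynomial in the input, hence $\twoexp$. By Fact~\ref{fact:cocoa-nof-levels}, $k\in 2^{O(n)}$, so the sum over levels stays in $\twoexp$. The overall cost is thus $\twoexp$.

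The main obstacle to the complexity claim is precisely controlling the cross-iteration blowup: the definition of $\mc F^\ell$ multiplies the previous automaton by the obligation graph and then subset-constructs, so without the uniform $\twoexp$ bound from Corollary~\ref{cor:size-of-Fl}, iterating $2^{O(n)}$ times could in principle tower exponentials. The uniform bound, together with DFW minimization being polynomial-time, is exactly what collapses the iteration into a flat $\twoexp$ total. The correctness argument itself presents no new difficulty beyond correctly chaining the language-preservation lemmas; the delicate content (the SLII argument and the infinite-path/pigeonhole argument for regular words) is already isolated in Lemmas~\ref{lem:Ll_includes_AlN} and~\ref{lem:AlN_includes_Ll}.
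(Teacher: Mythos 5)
Your proposal is correct and follows essentially the same route as the paper's proof: correctness by chaining the per-step lemmas (with Lemmas~\ref{lem:Ll_includes_AlN} and~\ref{lem:AlN_includes_Ll} giving $L(\mc F^\ell_N)=L^\ell$ under the inductive hypothesis), and complexity by combining the uniform post-minimization size bound of Corollary~\ref{cor:size-of-Fl} with the per-level cost of Lemma~\ref{lem:Fl-size}, polynomial-time minimization, and the $2^{O(n)}$ bound on the number of levels from Fact~\ref{fact:cocoa-nof-levels}. Your write-up is in fact somewhat more explicit about the correctness induction than the paper, which simply asserts it follows from the correctness of the individual steps, but the substance is identical.
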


\begin{proof}
  The correctness of the constructed COCOA (that the computed HD-tNCAs indeed reprsenent the COCOA levels) follows from Lemmas~\ref{lem:Ll_includes_AlN}, \ref{lem:AlN_includes_Ll}, \ref{lem:NFA-determinization}, and \ref{lem:DFA-to-HD-tNCA}.
  The time complexity is implied by the following.
  The first step takes time $2^{\mathit{poly(|\varphi|)}}$ by Fact~\ref{fact:LDL-to-Weak} and produces an automaton with $n \in O(|\varphi|)$ states.
  The second step takes time $\twoexp$\! by Fact~\ref{fact:SLTM}, Lemma~\ref{lem:graph-labelling}, and the $2^{O(n)}$ time complexity of the Miyano-Hayashi construction~\cite{MHConstruction}.
  The third step takes time linear in the size of the SLTM, i.e., $\twoexp$\!.
  The fourth step takes time $\twoexp$ since there are $\oneexp$ COCOA levels (Fact~\ref{fact:cocoa-nof-levels}) and each iteration takes $\twoexp$\! time, independently of the value $\ell$,
  which follows from Lemma~\ref{lem:Fl-uniform-bound},
  Lemma~\ref{lem:DFA-to-HD-tNCA},
  and Fact~\ref{fact:min-can-HD-tNCA}.
\end{proof}

\subsection{Lower bound for translating LDL into COCOA}

\newcommand\propCOCOAlowerBoundStatement{%
  Translating LTL into COCOA requires at least doubly exponential time.
  Specifically,
  there exist LTL formulas $\Phi_1,\Phi_2,\ldots$ that satisfy
  the following conditions.
  For every number $n$:
  \li
  \- the formula $\Phi_n$ has size $O(n)$ and is is defined over the set of APs of size $O(n)$;
  \- the COCOA for $L(\Phi_n)$ has size at least $2^{2^n}\!\!$,
     where the size of a COCOA is defined as the total number of states on all levels.
  \il
}
\begin{proposition}
  \label{prop:COCOA-lower-bound}
  \propCOCOAlowerBoundStatement
\end{proposition}
The proof relies on the result in~\cite{KR11},
which describes a family of LTL formulas $\Psi_1,\Psi_2,\ldots$ of stated sizes
such that a minimal DCA recognizing $L(\Psi_n)$ has doubly exponentially many states.
These DCAs are in fact among the smallest HD-tNCAs,
because they satisfy the property of the minimality stated in \cite{DBLP:journals/lmcs/RadiK22}.
Therefore, the corresponding HD-tNCAs also have a doubly exponential number of states.
The COCOA for $L(\neg \Psi_n)$ contains a single automaton with doubly exponentially many states.

\section{Translating LDL into tDPA}
\label{sec:LDL-to-DPA}

\subsection{High-level idea}
The procedure \textsc{LDLtoCOCOA} from the previous section
internally constructs a chain of \DFAs $(\mc F^1,\ldots,\mc F^k)$
representing the languages of the levels of the COCOA for $L(\varphi)$,
where $\varphi$ is a given LDL formula.
The procedure \textsc{LDLtoDPA} presented in this section
uses those \DFAs (except, perhaps, $\mc F^0$) to construct a tDPA recognizing $L(\varphi)$.
Intuitively,
the constructed tDPA simulates \DFAs from all levels and tries to stay at the deepest level.
The state space of the tDPA consists of states of the \DFAs.
Initially,
the tDPA starts in the deepest entry state labelled by the initial state of the SLTM.
Let the current tDPA state be a state $q^\ell$ of the \DFA $\mc F^\ell$.
On reading a letter,
the tDPA increases the level, decreases the level, or stays on the same level,
depending on the following.
\li
\- The state $q^\ell$ does not have a successor in $\Dl$ :
   the tDPA decreases the level until it finds $\Dlp$ that has a successor state,
   and transits into that state.
   The tDPA can find such $\Dlp$
   because the current state $q^\ell$ encodes the states of all lower-level \DFAs as well.
   (Recall that $q^\ell$ has shape $(q^\ellmo,V)$, where $q^\ellmo$ is a state of the \DFA $\mc F^\ellmo$.)
\- The state $q^\ell$ has a successor in $\Dl$:
   if the successor also has a refining state in a \DFA at a deeper level,
   then the tDPA transits into the entry state of the deepest such \DFA.
   Otherwise,
   the tDPA transits into the successor state of $q^\ell$ in $\Dl$.
\il
The color of the tDPA transition is equal to the smallest level participating in this dance.

\subsection{Procedure to translate LDL into tDPA}
\newcommand\penter{\mathit{pstate}}

The translation procedure consists of two functions,
the entry function \textsc{LDLtoDPA} (Figure~\ref{fig:LDL-to-DPA-entry}) and the recursive function \textsc{RecBuild} (Figure~\ref{fig:LDL-to-DPA-rec}).

\parit{\textsc{LDLtoDPA}}.
The function takes an LDL formula $\varphi$ and returns a tDPA recognizing $L(\varphi)$.
It starts by constructing
a weak alternating automaton $\mc A$ for $\varphi$ \cite{DBLP:journals/iandc/FaymonvilleZ17},
an SLTM $\mc M$ for $L(\mc A)$ (proof of Fact~\ref{fact:SLTM}), and
obligation graphs $\mc G_\mc A$, $\mc G_{\bar{\mc A}}$ with their graph labelling functions
(Section~\ref{sec:graph-labelling}).
It then builds \DFA $\mc F^0$ by re-using the whole transition structure of the SLTM (including transient states and transitions)  and
by using the identity function for the state labelling function of $\mc F^0$.
It initiates the recursive procedure by calling \textsc{RecBuild}($0,\mc F^0$),
which also has access to $\mc M$, $\mc G_{\mc A}$, $\mc G_{\bar{\mc A}}$, $\lambda^{\mcG_{\mc A}}$, $\lambda^{\mcG_{\bar{\mc A}}}$.
\textsc{RecBuild}($0,\mc F^0$) returns a tDPA structure $\mc P^0$ (with no initial state) and
a mapping $\penter^0$ that maps each state of $\mc F^0$ to a state of $\mc P^0$.
Finally, we return $\mc P^0$ with the initial state $\penter^0(s_0)$,
where $s_0$ is the initial state of the SLTM.

\begin{figure}[t]
  \begin{algorithmic}[1]
    \algrenewtext{Function}[2]{\algorithmicfunction\ \textproc{#1}(#2):}
    \algrenewcommand\algorithmicthen{:}
    \algrenewcommand\algorithmicdo{:}
    \algrenewcommand\algorithmicelse{\textbf{else:}}
    \Function{LDLtoDPA}{$\varphi$}
      \State translate $\varphi$ into a weak alternating automaton $\mc A$
      \State construct an SLTM $\mc M$ for $L(\mc A)$, oblig.\ graphs $\mc G_{\mc A}$, $\mc G_{\bar{\mc A}}$, the graph labelling $\lambda^{\mc G_{\mc A}}$\!,\,$\lambda^{\mc G_{\bar{\mc A}}}$
      \State derive the trivial \DFA $\mc F^0$ from the SLTM $\mc M$
      \State $\penter^0\!, \mc P^0 = \textsc{RecBuild}(0,\mc F^0)$
      \State \Return $\mc P^0$ with the initial state being $\penter^0(s_0)$,
                     where $s_0$ is the initial state of $\mc M$
    \EndFunction
  \end{algorithmic}
  \caption{The entry function for translating LDL into tDPA.}
  \label{fig:LDL-to-DPA-entry}
\end{figure}
\begin{figure}[t]
  \begin{algorithmic}[1]
    \algrenewtext{Function}[2]{\algorithmicfunction\ \textproc{#1}(#2):}
    \algrenewcommand\algorithmicthen{:}
    \algrenewcommand\algorithmicdo{:}
    \algrenewcommand\algorithmicelse{\textbf{else:}}
    \Function{RecBuild}{$\ell$, $\Dl$}
      \State \label{l2}initialize $\penter^\ell$ and $\mc P^\ell$ as empty
      \State \label{l3}\textbf{if} $\Dl$ has no states: \Return{$\penter^\ell\!,\, \mc P^\ell$}
      \vspace{1.2mm}
      \State \label{l4}\label{alg:line:NFA}construct \NFA $\Dll_N$ from $\Dl$, $\mc M$, $\mc G_{\mc A}$, $\mc G_{\bar{\mc A}}$ using Definition~\ref{def:FlN}
      \State \label{l5}\label{alg:line:det}determinize $\Dll_N$ into \DFA $\Dll$ with entry states using Definition~\ref{def:FlN-to-Fl}
      \State $\penter^\lPlus\!,\,\mc P^\lPlus \gets \Call{RecBuild}{\lPlus, \Dll}$\label{l6}
      \State add $\mc P^\lPlus$ to $\mc P^\ell$\label{l7}
      \vspace{1.2mm}
      \For{every state $q^\ell$ of $\Dl$}\label{l8}
        \If{$q^\ell$ has a refining state in $\Dll$}\label{l9}
          \State\label{alg:line:pstate-update}$\penter^\ell\!\!: q^\ell \mapsto \penter^\lPlus((q^\ell\!, \Vinit_{q^\ell}))$,\, where $(q^\ell\!,\Vinit_{q^\ell})$ is an entry state of $\Dll$\!\!\!\!\!\!\!\label{l10}
        \Else
          \State $\penter^\ell\!\!:q^\ell \mapsto q^\ell$\label{l12}
          \State add state $q^\ell$ to $\mc P^\ell$\label{alg:line:add-state}\label{l13}
        \EndIf
      \EndFor
      \vspace{1.2mm}

      \For{every transition $q^\ell \trans{x} {q^\ell}'$ of $\Dl$ that has no refining transitions in $\Dll$}\label{alg:line:corr-trans}\label{l14}\ak{TODO: BUG!!!}
        \State add transition $\penter^\ell(q^\ell) \trans{x} \penter^\ell({q^\ell}')$ with color $\ell$ to $\mc P^\ell$\label{alg:line:add-trans}\label{l15}
      \EndFor
      \vspace{1mm}

      \State \Return{$\penter^\ell\!,\, \mc P^\ell$}\label{alg:line:return}\label{l16}
    \EndFunction
  \end{algorithmic}
  \caption{The key function for translating LDL into tDPA.
           The function has access to $\mc M$, $\mc G_{\mc A}$, $\mc G_{\bar{\mc A}}$,
           $\lambda^{\mc G_{\mc A}}$\!, $\lambda^{\mc G_{\bar{\mc A}}}$.}
  \label{fig:LDL-to-DPA-rec}
\end{figure}

\parit{\textsc{RecBuild}}.
The function accepts a number $\ell$ representing the current level and a \DFA $\mc F^\ell$.
At the beginning,
the tDPA structure $\mc P^\ell$ (a tDPA without an initial state) has no states and transitions
and the mapping $\penter^\ell$ is empty.
The purpose of $\penter^\ell$ is to map each state $q^\ell$ of $\Dl$ to the entry state of the deepest $\Dlp$ that refines\footnote{A state $q^{\ell'}$ of $\Dlp$ \emph{refines} state $q^\ell$ of $\Dl$,
            where $\ell'>\ell$,
            if the projection of $q^{\ell'}$ on the states of $\Dl$ is $q^\ell$.}
$q^\ell$.
In lines \ref{l4}--\ref{l5},
we construct \DFA $\Dll$ from $\Dl$ using Definitions~\ref{def:FlN}\,and\,\ref{def:FlN-to-Fl}.
We then recurse onto deeper level, and
add all states and transitions constructed on deeper levels to the current $\mc P^\ell$.
Then,
each state $q^\ell$ of $\Dl$ that has a refining state in $\Dll$ is mapped to $\penter^\lPlus((q^\ell,\Vinit_{q^\ell}))$,
where $(q^\ell,\Vinit_{q^\ell})$ is an entry state of $\Dll$.
(Recall the purpose of entry states:
 a state $q^\ell$ of $\Dl$ can have several refining states in the \DFA $\Dll$,
 so the entry state $(q^\ell,\Vinit_{q^\ell})$ merges them into a single one,
 preserving their paths and avoiding the need to choose between them.)
Note that the state $(q^\ell,\Vinit_{q^\ell})$ into which we map $q^\ell$ may itself be mapped to a state in a deeper \DFA.
Lines \ref{l12}--\ref{l13}: If a state $q^\ell$ has no refining states at a deeper level,
we map it to itself, and add this state to $\mc P^\ell$.
Lines \ref{l14}--\ref{l15} add a transition of $\Dl$ to $\mc P^\ell$ if there is not yet a refining transition at a deeper level.
Notice that both states --
the source $\penter^\ell(q^\ell)$ and the destination $\penter^\ell({q^\ell}')$ --
may refer to states at a deeper level.

\subsection{Examples}

\begin{example}
  We run \textsc{LDLtoDPA} for the formula
  $\varphi = \LTLG(a \impl \LTLX \neg a) \land (\LTLGF a \impl \LTLGF b)$
  over the alphabet $\Sigma=2^{\{a,b\}}$.
  An SLTM for $L(\varphi)$ is in Figure~\ref{fig:SLTM}, obligation graphs are in Figures~\ref{fig:NBA}~and~\ref{fig:NBA_neg}.
  The \DFAs constructed by \textsc{LDLtoDPA} are very similar
  to those in Figure~\ref{fig:FlN}.
  The difference is that the state $(s_0,\tilde v_3, v_2)$ of $\mc F^2_N$
  becomes $(s_0,\{\tilde v_3\}, \{v_2\})$ in the \DFA $\mc F^2$,
  because we always execute to the determinization step in Line~\ref{l5}.
  Determinization also marks every state of $\mc F^1$ and $\mc F^2$ as an entry state.
  For instance,
  $(s_0,\{\tilde v_3\})$ is an entry state of $\mc F^1$
  because the \NFA $\mc F^1_N$ contains the only state, namely $(s_0,\tilde v_3)$,
  whose parent-\DFA component is $s_0$.
  Similarly,
  $(s_0,\{\tilde v_3\},\{v_2\})$ is an entry state of $\mc F^2$
  because $\mc F^2_N$ has just one state with the same parent-\DFA component $(s_0,\tilde v_3)$.


  The functions $\penter^\ell$ jointly induce the following chains,
  where $x \mapsto_\ell y$ means $\penter^\ell(x) = y$, for $\ell\in\{0,1,2\}$:
  \li
  \- $s_0 \mapsto_0 (s_0,\{\tilde v_3\}) \mapsto_1 (s_0,\{\tilde v_3\},\{v_2\}) \mapsto_2 (s_0,\{\tilde v_3\},\{v_2\})$,
  \- $s_1 \mapsto_0 (s_1,\{\tilde v_3\}) \mapsto_1 (s_1,\{\tilde v_3\})$, and
  \- $s_2 \mapsto_0 (s_2,\{\tilde v_2\}) \mapsto_1 (s_2,\{\tilde v_2\})$.
  \il
  The terminal states in these chains form the states of the resulting tDPA.
  In \textsc{RecBuild($2,\mc F^2$)},
  Line~\ref{l13} adds the state $(s_2,\{\tilde v_2\})$ to $\mc P^2$.
  Line~\ref{l15} then adds a self-loop $\bar a \bar b$ with color 2.
  No futher states or transitions are added at this level.
  In \textsc{RecBuild($1,\mc F^1$)},
  the states $(s_1,\{\tilde v_3\})$ and $(s_2,\{\tilde v_2\})$ are added to $\mc P^1$
  (which at this point includes $\mc P^2$).
  Since the transition $(s_0,\{\tilde v_3\})\trans{a\bar b} (s_1,\{\tilde v_3\})$ of $\mc F^1$
  has no refining transition in $\mc F^2$,
  Line~\ref{l15} adds the transition $(s_0,\{\tilde v_3\},\{v_2\}) \trans{a\bar b} (s_1,\{\tilde v_3\})$ with color 1 to $\mc P^1$.
  And so on.
  Finally,
  in \textsc{RecBuild($0,\mc F^0$)},
  the remaining transitions are added to make the tDPA complete (thanks to $\mc F^0$ being complete).
  In particular,
  we add these transitions:
  $(s_0,\{\tilde v_3\},\{v_2\}) \trans{\bar a b} (s_0,\{\tilde v_3\},\{v_2\})$ with color 0 and
  $(s_1,\{\tilde v_3\}) \trans{a} (s_2,\{\tilde v_2\})$ with color 0.
  Figure~\ref{fig:LDLtoDPA:example} shows the resulting tDPA.
\end{example}

\begin{figure}
  \centering
  \begin{tikzpicture}[->,>=stealth', auto, node distance=15mm,every node/.style={font=\footnotesize}]
    \tikzstyle{every state}=[rectangle,rounded corners,text=black,font=\footnotesize,initial text={},minimum size=4mm, inner sep=0.6mm]
    \tikzset{
      every edge/.append style = {font=\footnotesize}
    }
    \tikzstyle{marked}=[
      draw=green!38!black,
      postaction={
        decorate,
        decoration={
          markings,
          mark=at position 0.5 with {
            \fill[green!38!black] (0,0) circle[radius=0.4mm];
          }
        }
      }
    ]

    \node[state,initial] (p0) {$(s_0,\{\tilde v_3\},\{v_2\})$};
    \node[state] (p1) [right=1.4cm of p0] {$(s_1,\{\tilde v_3\})$};
    \node[state] (p2) [right=1cm of p1] {$(s_2,\{\tilde v_2\})$};

    \path[->] (p0) edge [loop,out=75,in=105,looseness=10] node[above,yshift=0mm] {$\bar a\bar b/2$} (p0)
              (p0) edge [loop,in=255,out=285,looseness=10] node[below,yshift=0mm] {$\bar a b/0$} (p0)
              (p0) edge [bend right=4] node[below,yshift=0.8mm]{$a \bar b / 1$} (p1)
              (p0) edge [bend right=30] node[below,yshift=0.8mm]{$a b / 0$} (p1)
              (p1) edge [bend right=4] node[above,yshift=-0.8mm]{$\bar a\bar b / 1$} (p0)
              (p1) edge [bend right=30] node[above,yshift=-0.8mm]{$\bar a b / 0$} (p0)
              (p1) edge [] node[below,yshift=0.8mm]{$a / 0$} (p2)
              (p2) edge [loop,out=75,in=105,looseness=10] node[above,xshift=0mm,yshift=0mm] {$*/1$} (p2)
              ;

  \end{tikzpicture}
  \caption{tDPA for the language of $\varphi = \LTLG(a \impl \LTLX \neg a) \land (\LTLGF a \impl \LTLGF b)$
           constructed by \textsc{LDLtoDPA} from \NFAs
           in Figure~\ref{fig:FlN}.}
  \label{fig:LDLtoDPA:example}
\end{figure}

\begin{example}
  This example illustrates the importance of entry states introduced during determinization in Definition~\ref{def:FlN-to-Fl}.
  Consider the language of $\varphi = \LTLGF(a\land \LTLX a)$ over the alphabet $\Sigma=2^{\{a\}}$.
  Its SLTM consists of a single state $s_0$ with a self-loop on every letter.
  The obligation graphs are shown in Figure~\ref{fig:LDLtoDPA:example2:graphs}.
  The \NFA $\mc F^1_N$ and the \DFA $\mc F^1$ are shown in Figure~\ref{fig:LDLtoDPA:example2:F1}.
  Notice that determinization introduces the state $(s_0,\{\tilde v_1,\tilde v_2\})$,
  which is an entry state.
  Removing this state does not change the language of the \DFA.
  However,
  this state is important for \textsc{LDLtoDPA} because it resolves the choice between the two states of the \NFA.
  The resulting tDPA is Figure~\ref{fig:LDLtoDPA:example2:F1}.
\end{example}

\begin{figure}[t]
  \centering
  \begin{tikzpicture}[->,>=stealth', auto, node distance=15mm]
    \tikzstyle{every state}=[text=black,font=\footnotesize,initial text={},minimum size=5mm, inner sep=0mm]
    \tikzset{
      every edge/.append style = {font=\footnotesize}
    }
    \tikzstyle{marked}=[
      draw=green!38!black,
      postaction={
        decorate,
        decoration={
          markings,
          mark=at position 0.5 with {
            \fill[green!38!black] (0,0) circle[radius=0.4mm];
          }
        }
      }
    ]

    \node[state,initial] (v0) {$v_0$};
    \node[state] (v1) [right=7mm of v0] {$v_1$};

    \path[->] (v0) edge [loop,out=75,in=105,looseness=8] node[above,xshift=0mm,yshift=0mm] {$*$} (v0)
              (v0) edge [bend right=10] node[below]{$a$} (v1)
              (v1) edge [marked,bend right=10] node[above,yshift=0mm]  {$a$} (v0);

    \node[state,initial left] (nv0) [right=1.8cm of v1] {$\tilde v_0$};
    \node[state] (nv1) [right=5mm of nv0] {$\tilde v_1$};
    \node[state] (nv2) [right=7mm of nv1] {$\tilde v_2$};

    \path[->]
         (nv0) edge [loop,out=75,in=105,looseness=8] node[above,xshift=0mm,yshift=0mm] {$*$} (nv0)
         (nv0) edge node[below,xshift=0mm,yshift=0mm] {$*$} (nv1)
         (nv1) edge[bend right=10] node[below,xshift=0mm,yshift=0mm] {$a$} (nv2)
         (nv2) edge[marked, bend right=10] node[above,xshift=0mm,yshift=0mm] {$\bar a$} (nv1)
         (nv1) edge [marked,loop,out=75,in=105,looseness=8] node[above,xshift=0mm,yshift=0mm] {$\bar a$} (nv1)
    ;

  \end{tikzpicture}
  \caption{On the left side:
           a tNBA $\mc G_\varphi$ recognizing the language of $\varphi=\LTLGF(a\land \LTLX a)$.
           On the right side:
           a tNBA $\mc G_{\neg\varphi}$ recognizing $L(\neg\varphi)$.
           The SLTM contains a single state $s_0$ with the self-loop.
           The graph labelling are
           $\lambda^{\mc G_\varphi}(s_0) = \{v_0,v_1\}$ and
           $\lambda^{\mc G_{\neg\varphi}}(s_0) = \{\tilde v_0,\tilde v_1,\tilde v_2\}$.
           }
  \label{fig:LDLtoDPA:example2:graphs}
\end{figure}

\begin{figure}[t]
  \centering
  \begin{tikzpicture}[->,>=stealth', auto, node distance=15mm,every node/.style={font=\footnotesize}]
    \tikzstyle{every state}=[rectangle,rounded corners,text=black,font=\footnotesize,initial text={},minimum size=4mm, inner sep=0.6mm]
    \tikzset{
      every edge/.append style = {font=\footnotesize}
    }
    \tikzstyle{marked}=[
      draw=green!38!black,
      postaction={
        decorate,
        decoration={
          markings,
          mark=at position 0.5 with {
            \fill[green!38!black] (0,0) circle[radius=0.4mm];
          }
        }
      }
    ]

    \node[state] (nv1) {$(s_0,\tilde v_1)$};
    \node[state] (nv2) [right=5mm of nv1] {$(s_0,\tilde v_2)$};

    \path[->]
         (nv1) edge[bend right=10] node[below,xshift=0mm,yshift=0mm] {$a$} (nv2)
         (nv2) edge[bend right=10] node[above,xshift=0mm,yshift=0mm] {$\bar a$} (nv1)
         (nv1) edge [loop,out=75,in=105,looseness=8] node[above,xshift=0mm,yshift=0mm] {$\bar a$} (nv1)
    ;

    \node[state] (dnv1) [right=10mm of nv2]{$(s_0,\{\tilde v_1\})$};
    \node[state] (dnv0) [above=7mm of dnv1, xshift=10mm]{$(s_0,\{\tilde v_1,\tilde v_2\})$};
    \node[state] (dnv2) [right=6mm of dnv1] {$(s_0,\{\tilde v_2\})$};

    \path[->]
         (dnv0) edge[bend right=10] node[right,xshift=0mm,yshift=0mm] {$\bar a$} (dnv1)
         (dnv0) edge[bend left=10] node[right,xshift=0mm,yshift=0mm] {$a$} (dnv2)
         (dnv1) edge[bend right=10] node[below,xshift=0mm,yshift=0mm] {$a$} (dnv2)
         (dnv2) edge[bend right=10] node[above,xshift=0mm,yshift=0mm] {$\bar a$} (dnv1)
         (dnv1) edge [loop,out=75,in=105,looseness=8] node[above,xshift=0mm,yshift=0mm] {$\bar a$} (dnv1)
    ;

    \node[state] (p1) [right=10mm of dnv2]{$(s_0,\{\tilde v_1\})$};
    \node[state,initial left] (p0) [above=8mm of p1, xshift=11mm]{$(s_0,\{\tilde v_1,\tilde v_2\})$};
    \node[state] (p2) [right=9mm of p1] {$(s_0,\{\tilde v_2\})$};

    \path[->]
         (p0) edge[bend right=10] node[left,xshift=1.5mm,yshift=1.2mm] {$\bar a/1$} (p1)
         (p0) edge[bend left=10] node[left,xshift=0mm,yshift=0mm] {$a/1$} (p2)
         (p1) edge[bend right=5] node[below,xshift=0mm,yshift=0.8mm] {$a/1$} (p2)
         (p2) edge[bend right=5] node[above,xshift=0mm,yshift=-0.9mm] {$\bar a/1$} (p1)
         (p1) edge [loop,out=75,in=105,looseness=8] node[left,xshift=0mm,yshift=0mm] {$\bar a/1$} (p1)
         (p2) edge[bend right=20,out=300,in=210] node[right,xshift=0mm,yshift=0mm] {$a/0$} (p0)
    ;

  \end{tikzpicture}
  \caption{From left to right: \NFA $\mc F^1_N$, \DFA $\mc F^1$\!,
           and tDPA computed by \textsc{LDLtoDPA} for $\varphi = \LTLGF(a \land \LTLX a)$ and the NBAs in Figure~\ref{fig:LDLtoDPA:example2:graphs}.}
  \label{fig:LDLtoDPA:example2:F1}
\end{figure}

\subsection{Procedure correctness}

Before proving the correctness, we introduce two concepts:
the projection of tDPA states into \DFA state space and
a notion of a word ``reaching'' a \DFA.

Every state of a tDPA constructed by the algorithm is a state of some \DFA.
Consider the shape of \DFA states.
The states of the \DFA $\mc F^0$ are exactly the states of the SLTM $\mc M$.
A state of a \DFA from level $\ell\geq1$ has the form $(q,V^\ell)$,
where
$q$ is a state of its parent $\mc F^{\ell-1}$ and $V^\ell$ is a set of vertices of the obligation graph $\mc G^\ell$,
and $\mc G^\ell = \mc G_\mc A$ if $\ell$ is even and $\mc G^\ell = \mc G_{\bar{\mc A}}$ if $\ell$ is odd.
By unfolding the recursive definition of \DFA states and omitting nested parentheses,
a state of a \DFA at level $\ell$ can be written as $(s,V^1,\ldots,V^\ell)$,
where
$s$ is a state of the SLTM and $V^i$ is a set of vertices of the obligation graph $\mc G^i$,
for all $i \in \{1,\ldots,\ell\}$.
Given a tDPA state $q = (s,V^1,\ldots,V^\ell)$ and a \DFA $\mc F^i$ at level $i$,
the projection $q_{\mid \mc F^i}$ is $(s,V^1,\ldots,V^i)$
when $i\leq\ell$ and the state $(s,V^1,\ldots,V^i)$ belongs to $\mc F^i$,
otherwise $q_{\mc F^i} = \circ$,
where $\circ$ is a distinguished symbol.
Given a tDPA transition $(q,x,q')$,
its projection $(q,x,q')_{\mid \mc F^i}$ onto \DFA $\mc F^i$ is
$(q_{\mid \mc F^i},x,q'_{\mid \mc F^i})$.
Note that $(q_{\mid \mc F^i},x,q'_{\mid \mc F^i})$ might be not a valid transition of $\mc F^i$,
even when both $q_{\mid \mc F^i}$ and $q'_{\mid \mc F^i}$ are not $\circ$.

A tDPA run $(q_0,x_0,q_1)(q_1,x_1,q_2)\ldots$
\emph{enters} a \DFA $\mc F$ at moment $j \geq 0$
if
${q_{j}}_{\mid \mc F} \neq \circ$ and
either $j=0$, or for $j>0$,
$\big({q_{j-1}}_{\mid \mc F},x_{j-1},{q_{j}}_{\mid \mc F}\big)$
is \emph{not} a transition of $\mc F$.

\begin{lemma}\label{lem:dpa-enters-init}
  A run of the tDPA can enter a \DFA $\Dl$ of a level $\ell$ only through an entry state of $\Dl$.
\end{lemma}

\begin{proof}
  \newcommand\projDn{{\mid \mc F^n}}
  Let $j$ be the moment when the run enters $\Dl$.
  There are two cases: $j>0$ and $j=0$.

  Suppose $j>0$.
  Let $(q_\mathit{prev}, x, q)$ be the transition of the run taken at moment $j-1$.
  Since $({q_\mathit{prev}}_\projDl, x, q_\projDl) \not\in\delta^{\Dl}$,
  there exists $n<\ell$ (not necessary $\ell-1$)
  such that
  the guard in Line~\ref{l14} of \textsc{RecBuild($n,\mc F^n$)} holds true:
  $({q_\mathit{prev}}_\projDn,x,q_\projDn) \in \delta^{\mc F^n}$ and there are no refining transitions.
  Then, according to Line~\ref{l14}, the tDPA has the transition
  $\penter^n({q_\mathit{prev}}_\projDn) \trans{x} \penter^n(q_\projDn)$,
  which is taken at moment $j-1$ of the run.
  Let us see how $\penter^n({q}_\projDn)$ was defined during the execution of \textsc{RecBuild}.
  Let $q^i = q_{\mid \mc F^i}$ for $i = 0,\ldots,\ell$.
  Observe that Line~\ref{l12} was not executed
  on any of the states $q^0,\ldots,q^{\ell-1}$,
  because they all have a refining state $q^\ell$.
  For these states,
  the guard on Line~\ref{l9} is true,
  and Line~\ref{l10} is executed.
  On level $\ell-1$,
  the state $q^{\ell-1}$ is mapped to $\penter^\ell((q^{\ell-1},\Vinit_{q^{\ell-1}}))$.
  These mappings are chained:
  $q = \penter^0(q^0) = \ldots = \penter^\ell((q^{\ell-1},\Vinit_{q^{\ell-1}}))$.
  Thus,
  $q_\projDl = \penter^\ell((q^{\ell-1},\Vinit_{q^{\ell-1}}))_\projDl = (q^{\ell-1},\Vinit_{q^{\ell-1}})$.
  The latter state is an entry state of $\Dl$.

  The proof of the case $j=0$ is similar and omitted.
\end{proof}

We define what it means for a word to \emph{reach} a \DFA.
Given a word $w = x_0 x_1 \ldots$,
let $\rho = q_0 q_1 q_2\ldots$ be the corresponding state run of the tDPA.
For a moment $m$,
let $\rho[m{:}]$ denote the suffix of $\rho$ starting at $m$,
and let $w[m{:}]$ denote the suffix of $w$ starting at $m$.
For a \DFA $\Dl$ at level $\ell$,
let $\rho[m{:}]_\projDl$ denote the projection of the suffix run onto $\Dl$.
Observe that $\rho[m{:}]_\projDl$ is not necessarily a path of $\Dl$ on $w[m{:}]$:
there might exist $j\geq m$ s.t.\ $\rho[j]_\projDl = \circ$, or
there might exist $j$ s.t.\ $({q_j}_\projDl,x_j,{q_{j+1}}_\projDl)$ is not a transition of $\Dl$
(even if there are no $\circ$).
When
there exists a moment $m$
such that the projection $\rho[m{:}]_{\projDl}$ is an infinite path on $w[m{:}]$ starting in an entry state of $\Dl$,
we say that a word $w$ \emph{reaches} $\Dl$.
This means that the run $\rho$ of the tDPA on $w$ gets trapped in an SCC of $\Dl$,
and possibly in a deeper one.

\begin{lemma}\label{lem:ldl-to-dpa}
  Fix $\ell\geq0$, \DFA $\Dl$ from level $\ell$, and a word $w$ that reaches $\Dl$.
  Then
  \li
  \- for odd $\ell$: $w \in L \Rightarrow$ $w$ reaches $\Dll$;
  \- for even $\ell$: $w \not\in L \Rightarrow$ $w$ reaches $\Dll$.
  \il
\end{lemma}

\begin{proof}
  Consider the case of odd $\ell$; the case of even $\ell$ is similar.

  Let $\rho$ be the state run of the tDPA on $w$.
  Let $m_1$ be the moment,
  after which every state that $\rho$ visits,
  appears infinitely often.
  Since $w$ reaches $\Dl$,
  the projection $\rho[m_1{:}]_\projDl$
  is an infinite path of $\Dl$ on $w[m_1{:}]$,
  and every state appearing in $\rho[m_1{:}]_\projDl$
  appears there infinitely often.

  Since $w \in L$,
  it induces an accepting run $\gamma = v_0 v_1 \ldots$ in the obligation graph $\mc G_{\mc A}$.
  After some moment $m_2$,
  the run $\gamma$ starts visiting only states that appear infinitely often in it.

  Let $m'$ be the maximal between $m_1$ and $m_2$.

  We combine $\rho[m'{:}]_\projDl$ and $\gamma[m'{:}]$
  into the product sequence
  $(q_{m'},v_{m'})(q_{m'+1},v_{m'+1})\ldots$,
  where $q_j = \rho[j]_\projDl$ and $v_j = \gamma[j]$ for all $j\geq m'$.
  Let $m\geq m'$ be the moment after which this product sequence
  visits only pairs that appear infinitely often,
  and let $\pi = (q_m,v_m)(q_{m+1},v_{m+1})\ldots$ be the corresponding product.

  \medskip
  \noindent
  \emph{Claim 1:
        the product $\pi$ is an infinite path of the \NFA $\mc F_N^\lPlus$ on $w[m{:}]$.
        }\\
  Proof:
  The claim follows from Definition~\ref{def:FlN}.
  First,
  in the pair $(q_m,v_m)$,
  the vertex $v_m$ belongs to the SLTM label of $q_m$
  \footnote{Recall that the very first component of the state $q_m$ of $\Dl$ is a state of the SLTM,
            which is used as the definition of the label of $q_m$ when applying Definition~\ref{def:FlN}.},
  because the vertex $v_m$ and the state $q_m$ (hence its label) are reachable on the same prefix $w[0{:}m{-}1]$.
  Thus,
  in Step 1 of Definition~\ref{def:FlN},
  $(q_m,v_m)$ is added to the preliminary transition structure of the \NFA $\mc F^\lPlus_N$.
  This state together with all subsequent transitions of $\pi$
  belong to some SCC of the preliminary transition structure.
  Moreover, $\pi$ visits a $\mc G_{\mc A}$-accepting transition infinitely often because $\gamma$ does.
  Therefore,
  Step 2 of Definition~\ref{def:FlN} declares that the SCC containing $\pi$ belongs to the finally built \NFA $\mc F^\lPlus_N$.

  \medskip
  \noindent
  \emph{Claim 2:
        for every state $q$ of $\Dl$ in $\pi$,
        the \DFA $\Dll$ has a unique entry state.}\\
  Proof:
  By Claim~1,
  for every such $q$
  there is at least one vertex $v$ s.t.\ the \NFA $\mc F^\lPlus_N$ contains $(q,v)$.
  The claim then follows from the definition of entry states of \DFA $\Dll$.

  \medskip
  \noindent
  \emph{Claim 3:
        for each $(q,v)$ of $\pi$,
        the vertex $v$ belongs to $\Vinit_q$,
        where $(q,\Vinit_q)$ is an entry state of $\Dll$.}\\
  Proof:
  By Claim~1,
  each $(q,v)$ of $\pi$ belongs to $\mc F^\lPlus_N$.
  By Claim~2,
  the state $q$ has an entry state $(q,\Vinit_q)$ in $\Dll$.
  By the definition of entry states of $\Dll$,
  the set $V$ contains every $v$ s.t.\ $(q,v)$ belongs to the \NFA $\mc F^\lPlus_N$,
  implying the claim.

  \medskip
  \noindent
  \emph{Claim 4:
        Fix a starting moment $j \geq m$;
        let $(q_j,v_j)$ be the corresponding state in $\pi$;
        let $(q_j,\Vinit_j)$ be the corresponding entry state of $\Dll$.
        Then:
        the suffix $w[j{:}]$ has an infinite path in $\Dll$ starting from $(q_j,\Vinit_j)$,
        and $v_j \in \Vinit_j$.}\\
  Proof:
  Let $\rho^\star = (q_j,\Vinit_j)(q_{j+1},V_{j+1})\ldots$ be the maximal path of $\Dll$ on $w[j{:}]$.
  We need to show that $\rho^\star$ has infinite length.
  The infinity of $\rho^\star$ is implied by the infinity of $\pi$ and the following statement.
  For each moment $n\geq j$,
  the state $(q_n,V_n)$ of $\rho^\star$ ensures that $v_n \in V_n$.
  Let us prove this.
  Initially,
  $v_j \in \Vinit_j$ by Claim~3.
  Assume this for $n\geq j$, so $v_n \in V_n$,
  and consider the successor state $(q_{n+1},V_{n+1}) = \delta^{\Dll}((q_n,V_n),w[n])$.
  By Claim~1, $(q_{n+1},v_{n+1}) \in \delta^{\Dll_N}((q_n,v_n), w[n])$.
  By Definition~\ref{def:FlN-to-Fl},
  it holds that
  $V_{n+1} = \big(\bigcup_{v_n \in V_n}\delta^{\mc F^\ell_N}((q_n,v_n),w[n])\big)_{\mid G^\ell}$,
  hence $v_{n+1} \in V_{n+1}$.

  \medskip
  \noindent
  \emph{Claim 5:
        If the run $\rho$ enters $\Dll$ at a moment $j\geq m$,
        then $\rho[j{:}]_\projDll$ is an infinite path of $\Dll$ on $w[j{:}]$ starting from an entry state,
        i.e.,
        the word $w$ reaches $\Dll$.}\\
  Proof:
  By Lemma~\ref{lem:dpa-enters-init},
  the run enters $\Dll$ through its entry state.
  Let $\rho[j]_\projDll = (q_j,\Vinit_j)$ be an entry state of $\Dll$ when $\rho$ enters it.
  By Claim~4,
  $\Dll$ has an infinite path $q^\lPlus_j q^\lPlus_{j+1} q^\lPlus_{j+2} \ldots$ on $w[j{:}]$
  starting from $q^\lPlus_j = (q_j,\Vinit_j)$.
  We show that for every $i\geq j$,
  $\rho[i]_\projDll = q^\lPlus_i$.
  For $i=j$,
  we already established that $\rho[i]_\projDll = \rho^\lPlus[i]$.
  If it holds for $i\geq j$,
  then it also holds for $i+1$,
  since the tDPA simulates, when possible, transitions of $\Dll$ on $w$
    \footnote{If the tDPA is in state $q$, and $q_\projDll$ transits into state $\hat q^\lPlus$ of $\Dll$,
              then the tDPA successor state $q'$ satisfies $q'_{\projDll} = \hat q^\lPlus$.
              This is because the tDPA is built from determinized \NFAs built via Definition~\ref{def:FlN},
              and such \NFAs have this property.
              }
  .\label{usage:lem:dpa-DL-transits}
  Thus, $\rho[j{:}]_\projDll$ is an infinite path of $\Dll$ on $w[j{:}]$, and $w$ reaches $\Dll$.

  \medskip
  \noindent
  \emph{Claim~6:
        for all $j \geq m$,
        ${q_j}_\projDll \neq \circ$.}\\
  Proof:
  Let $q_j^\ell = {q_j}_\projDl$.
  By Claim 2,
  $\Dll$ has an entry state for ${q^\ell_j}$.
  Hence, in \textsc{RecBuild($\ell,\Dl$)},
  the guard in Line~\ref{l9} holds for $q^\ell_j$,
  so Line~\ref{l10} maps $\penter^\ell: {q^\ell_j} \mapsto \penter^\lPlus((q^\ell_j,\Vinit_{q^\ell_j}))$.
  We distinguish two cases, $j>0$ and $j=0$.
  Suppose $j>0$.
  Consider the transition $(q_{j-1},x_{j-1},q_j)$ of the tDPA run.
  Let $\ell'\leq \ell$ be the largest number such that
  $({q_{j-1}}_\projDlp,x_{j-1},{q_j}_\projDlp)$ is a transition of the \DFA $\Dlp$,
  where either $\ell'<\ell$ or $\ell = \ell'$ and then $\Dlp$ is $\Dl$ itself.
  In \textsc{RecBuild($\ell',\Dlp$)},
  for the transition $(q^{\ell'}_{j-1},x_{j-1},q^{\ell'}_j)$ of $\Dlp$,
  either (a) Line~\ref{l14} holds true and Line~\ref{l15} is executed, or (b) the guard in Line~\ref{l14} is false.
  In case (a),
  the transition $(q_{j-1},x_{j-1},q_j)$ of the tDPA run is
  $\penter^{\ell'}(q^{\ell'}_{j-1}) \trans{x_{j-1}} \penter^{\ell'}(q^{\ell'}_j)$.
  Note that
  $\penter^{\ell'}(q^{\ell'}_j) = \ldots = \penter^{\ell}(q^{\ell}_j) = \penter^{\ell+1}((q^\ell_j,\Vinit_{q^\ell_j}))$.
  Therefore,
  it holds that
  $q_j = \penter^{\ell+1}((q^\ell_j,\Vinit_{q^\ell_j}))$, so
  ${q_j}_\projDll = (q^\ell_j,\Vinit_{q^\ell_j}) \neq \circ$.
  Now consider case (b).
  Then, $\ell'=\ell$.
  The transition $({q_{j-1}}_\projDl,x_{j-1},{q_j}_\projDl)$ of $\Dl$
  has a refining transition,
  hence ${q_j}_\projDll \neq \circ$.
  Finally, the case of $j=0$ is proven similarly and omitted.

  \medskip
  Finally, we are ready to prove that the word $w$ reaches $\Dll$.
  Claim~6 implies that the run $\rho$ enters $\Dll$ at least once.
  If this happens at a moment $j\geq m$,
  then by Claim~5,
  $w$ reaches $\Dll$.
  Consider the case when $\rho$ enters $\Dll$ for the last time at a moment $n<m$.
  By Lemma~\ref{lem:dpa-enters-init},
  $\rho[n]_\projDll$ is an entry state of $\Dll$.
  Thus,
  $\rho[n{:}]_\projDll$ starts in an entry state,
  and, because no other entering happens,
  for all $j\geq n$ $({q_j}_\projDll,x_j,{q_{j+1}}_\projDll)$ is a transition of $\Dll$.
  Therefore, $\rho[n{:}]_\projDll$ is an infinite path of $\Dll$ on $w[n{:}]$, and $w$ reaches $\Dll$.
\end{proof}

\begin{theorem}\label{thm:ldl-to-dpa}
  For an LDL formula $\varphi$,
  function \textsc{LDLtoDPA} runs in time $2^{2^{O(|\varphi|)}}$\!\! and returns a tDPA with $2^{2^{O(|\varphi|)}}$\!\! states recognizing $L(\varphi)$.
\end{theorem}

\begin{proof}
  To prove that the tDPA recognizes $L(\varphi)$,
  it suffices to show that for every word $w$:
  \li
  \item if $w \in L$, then the deepest \DFA $\Dl$ reached by $w$ has even $\ell$; and
  \item if $w \in \overline{L}$, then the deepest \DFA $\Dl$ reached by $w$ has odd $\ell$.
  \il
  Consider the first case.
  If $w$ reaches some $\Dl$ for odd $\ell$,
  then by Lemma~\ref{lem:ldl-to-dpa},
  it must also reach some child $\Dll$ on even level.
  Hence the deepest level is even.
  The proof of the second case is similar.

  We now analyze the time complexity.
  Lines~\ref{l4} and \ref{l5} have a uniform (independent of $\ell$) time complexity of $\twoexpfi$ (Lemma~\ref{lem:Fl-uniform-bound}).
  Checking if a state has refining states at a deeper level (Line~\ref{l9}) can also be done in time $\twoexpfi$\! by naive enumeration of states in current $\mc P^\ell$ (which has $\twoexpfi$\! states independently of $\ell$);
  the same bound applies to checking for refining transitions.
  Therefore, at each node, \textsc{RecBuild} spends time $\twoexpfi$.

  The recursion depth is bounded by the number of levels in the COCOA for $L(\varphi)$,
  which is $\oneexpfi$,
  yielding an overall running time of $\twoexpfi$\!.
\end{proof}

\section{Translating LDL into LDBA}\label{sec:LDL-to-LDBA}
This section describes a procedure to translate LDL into LDBA.
It can also be applied to weak alternating automata.
We describe the procedure here for completeness, and
in order to exhibit similarities with the procedure of~\cite{DBLP:journals/jacm/EsparzaKS20}.

A transition-based \emph{limit-deterministic B\"uchi automaton} (LDBA) is a tNBA $(\Sigma, Q, q_0, \delta, \alpha)$ having the following structure:
its set of states can be partitioned $Q = \overline{Q_d} \uplus Q_d$
such that
$q_0 \in \overline{Q_d}$,
the transition relation $\delta \cap (Q_d \x \Sigma \x Q_d)$ is deterministic,
there are no transitions from states in $Q_d$ to states in $\overline{Q_d}$, and
$\alpha \subseteq Q_d \x\Sigma\x Q_d$.
Thus, in LDBAs,
every accepting run starts in the nondeterministic part then eventually moves into the deterministic part and satisfies the B\"uchi acceptance condition there.

Before describing our translation LDL-to-LDBA, we state an observation.
Fix an SLTM $\mc M$ and a \DFA wrt.\ $\mc M$.
The \DFA can be viewed as the following LDBA.
The LDBA starts in the SLTM.
The LDBA deterministic partition consists entirely of \DFA's states and transitions,
with all \DFA's transitions declared as accepting in the LDBA.
The LDBA nondeterministically jumps from a state of the SLTM into any state of the \DFA labelled by that SLTM state.
It is not hard to see that the LDBA recognizes the language of the \DFA.

The idea of the translation LDL-to-LDBA is the following.
First, construct \DFAs $(\mc F^1,\ldots,\mc F^k)$ for a given LDL formula $\varphi$,
as described in Section~\ref{subsec:FCW_Al}.
Recall that a word belongs to $L(\varphi)$ if and only if
there exists an even $\ell \in \{0,\ldots,k\}$
such that $\mc F^\ell$ accepts the word but $\mc F^\ellpo$ rejects it,
where, by convention, $\mc F^0$ accepts everything and $\mc F^{k+1}$ accepts nothing.
The LDBA for $L(\varphi)$ starts in the SLTM.
It nondeterministically chooses a moment and an even level $\ell \in \{0,\ldots,k\}$,
then jumps into deterministic component $\mc D^\ell$,
which is a special kind of product $\mc F^\ell \x \overline{\mc F^\ellpo}$
recognizing $L(\mc F^\ell)\cap \overline{L(\mc F^\ellpo)}$.
For a run in $\mc F^\ell \x \overline{\mc F^\ellpo}$ to be accepting,
it has to always stay in $\mc F^\ell$ and infinitely often leave $\mc F^\ellpo$,
which corresponds to the B\"uchi acceptance condition.

Figure~\ref{fig:alg:ldl-to-ldba} describes the top-level function \textsc{LDLtoLDBA}.
The function returns an LDBA with states $S\uplus Q_d$,
where $S$ are the states of the SLTM and $Q_d$ are the states of all products
$\mc F^\ell \x \overline{\mc F^\ellpo}$ for even levels $\ell$.
The LDBA initial state is the SLTM initial state $s_0$.
The transitions $\delta^S\uplus\delta_\mathit{jump}\uplus\delta_d$ are the union of those of the SLTM,
of the products, and nondeterministic transitions from the SLTM to the products.
The choice of whether to stay in the SLTM or to jump into a product component
and the choice of the product are the only nondeterministic choices in the LDBA
(there is no nondeterminism in the SLTM).
These nondeterministic transitions are defined in lines~\ref{ldl-to-ldba:line:start-jump}\,--\,\ref{ldl-to-ldba:line:end-jump}:
the SLTM can jump into any state of any product provided it is labelled by that SLTM state.
The jumping uses $\epsilon$-transitions that can be removed in a standard way.

Figure~\ref{fig:alg:ldl-to-ldba:product} describes the product $\mc F^\ell\x\overline{\mc F^\ellpo}$ construction.
The product has states $Q_d^\ell \subseteq Q^{\Fl}\uplus Q^{\mc F^\ellpo}$:
it contains all states of $Q^{\Fellpo}$;
it contains a state $q^{\Fl}$ of $\Fl$
if and only if
$\Fellpo$ has no state projecting onto $q^{\Fl}$
(lines~\ref{ldl-to-ldba:product:start-states}\,--\,\ref{ldl-to-ldba:product:end-states}).
Line~\ref{ldl-to-ldba:product:Fellpo-transitions} adds all transitions of $\Fellpo$ to the product.
Lines~\ref{ldl-to-ldba:product:start-complementing-Fellpo}\,--\,\ref{ldl-to-ldba:product:end-complementing-Fellpo} add the transitions of $\Fl$ that are not possible in $\Fellpo$:
each time $\Fellpo$ cannot make a transition while $\Fl$ transits into a state $q^{\Fl}\!$,
the product either transits into an entry state of $\Fellpo$ for $q^{\Fl}$ if such exists,
or otherwise transits into $q^{\Fl}$.
Such transitions are set as accepting.

\begin{figure}
  \begin{algorithmic}[1]
    \algrenewtext{Function}[2]{\algorithmicfunction\ \textproc{#1}(#2):}
    \algrenewcommand\algorithmicthen{:}
    \algrenewcommand\algorithmicdo{:}
    \algrenewcommand\algorithmicelse{\textbf{else:}}
    \Function{LDLtoLDBA}{$\varphi$}
      \State translate $\varphi$ into a weak alternating automaton $\mc A$
      \State construct an SLTM $(\Sigma,S,s_0,\delta^S)$ for $L(\mc A)$
      \State derive \DFA $\mc F^0 = (\Sigma,S,\delta^S\!,\,\mathit{identity})$ from the SLTM
      \State construct \DFAs $\mc F^1,\ldots,\mc F^k$ with entry states
      \State let $\mc F^{k+1}$ be the \DFA with no states
      \State initialize $Q_d,\delta_d,\alpha_d,\delta_\mathit{jump}$ as empty
      \For{every even $\ell$ in $\{0,\ldots,k\}$}
        \State $Q_d^\ell,\delta_d^\ell,\alpha_d^\ell = \textsc{Product}(\mc F^\ell,\mc F^\ellpo)$
        \State add $(Q_d^\ell,\delta_d^\ell,\alpha_d^\ell)$ to $(Q_d,\delta_d,\alpha_d)$
      \EndFor
      \For{every state $s$ of the SLTM}\label{ldl-to-ldba:line:start-jump}
        \For{every state $q_d$ in $Q_d$ labelled by $s$}
          \State add epsilon transition $(s,\epsilon,q_d)$ to $\delta_\mathit{jump}$
        \EndFor
      \EndFor\label{ldl-to-ldba:line:end-jump}
      \State \Return $(\Sigma,S\uplus Q_d,s_0,\delta^S\uplus\delta_\mathit{jump}\uplus\delta_d,\alpha_d)$
    \EndFunction
  \end{algorithmic}
  \caption{Translating LDL into LDBA.}
  \label{fig:alg:ldl-to-ldba}
\end{figure}

\begin{figure}
  \begin{algorithmic}[1]
    \algrenewtext{Function}[2]{\algorithmicfunction\ \textproc{#1}(#2):}
    \algrenewcommand\algorithmicthen{:}
    \algrenewcommand\algorithmicdo{:}
    \algrenewcommand\algorithmicelse{\textbf{else:}}
    \Function{Product}{$\Fl\!, \mc F^\ellpo$}
      \State \textbf{if} $\mc F^\ellpo$ has no states: \Return{$Q^{\Fl}\!\!,\,\delta^{\Fl}\!\!,\,\delta^{\Fl}$}
      \State initialize $Q^\ell_d,\delta^\ell_d,\alpha^\ell_d$ as empty
      \For{each state $q^{\Fl}$ that has no entry state in $\mc F^\ellpo$}\label{ldl-to-ldba:product:start-states}
        \State add $q^{\Fl}$ to $Q^\ell_d$
      \EndFor
      \State add $(Q^{\mc F^\ellpo}\!\!\,,\delta^{\mc F^\ellpo})$ to $(Q^\ell_d,\delta^\ell_d)$\label{ldl-to-ldba:product:end-states}\label{ldl-to-ldba:product:Fellpo-transitions}

      \While{$\exists q \in Q^\ell_d$, $x\in\Sigma$, $\tilde q^{\Fl}$: $\Fl$ has $q_{\mid \Fl}\trans{x}\tilde q^{\Fl}$ but $\delta^\ell_d$ has no transition for $q,x$}\label{ldl-to-ldba:product:start-complementing-Fellpo}
        \State let $\tilde q$ be an entry state $\tilde q^{\mc F^\ellpo}_e$\,for $\tilde q^{\Fl}$\,if it exists, otherwise $\tilde q^{\Fl}$
        \State add $(q, x,\tilde q)$ to $\delta^\ell_d$ and $\alpha^\ell_d$
      \EndWhile\label{ldl-to-ldba:product:end-complementing-Fellpo}

      \State \Return{$Q^\ell_d,\delta^\ell_d,\alpha^\ell_d$}
    \EndFunction
  \end{algorithmic}
  \caption{Constructing the product $\mc F^\ell \x \overline{\mc F^\ellpo}$
           to recognize the language $L(\Fl)\cap\overline{L(\mc F^\ellpo)}$.}
  \label{fig:alg:ldl-to-ldba:product}
\end{figure}

\begin{theorem}\label{LDL-to-LDBA}
  Given an LDL formula $\varphi$,
  the algorithm in Figures~\ref{fig:alg:ldl-to-ldba}\hspace{0.3mm}--\ref{fig:alg:ldl-to-ldba:product} constructs a transition-based LDBA recognizing $L(\varphi)$
  with $2^{2^{O(|\varphi|)}}$states in time $2^{2^{O(|\varphi|)}}$.
\end{theorem}

\begin{proof}
  Fix an LDL formula $\varphi$.

  Suppose a word $w = x_0 x_1 \ldots$ satisfies $\varphi$.
  There exists an even number $\ell \in \{0,\ldots,k\}$
  such that $w \in L(\Fl)\cap \overline{L(\Fellpo)}$.
  Since $\Fl$ accepts $w$,
  let $m$ be a witnessing moment and let $q^{\Fl}_m q^{\Fl}_{m+1}\ldots$ be an infinite run of $\Fl$,
  where the state $q^{\Fl}_m$ is labelled with the SLTM state $s = \delta^S(s_0, x_0 \ldots x_{m-1})$.
  We describe a run $\rho$ of the LDBA on $w$ and show it is accepting.
  It stays in the SLTM until moment $m$,
  then jumps from the SLTM state $s$ into state $q$ of the product $\Fl\x\overline{\Fellpo}$,
  where $q$ is an entry state of $\Fellpo$ for $q^{\Fl}_m$ if it exists, otherwise $q^{\Fl}_m$.
  Afterwards, the run $\rho$ evolves deterministically.
  Since the word $w$ is not accepted by $\Fellpo$,
  any run of $\Fellpo$ on $w$ starting at any moment $m'$ eventually leaves $\Fellpo$,
  or we cannot even enter $\Fellpo$ at moment $m'$.
  Thus,
  there exists infinitely many moments $m' \geq m$,
  where the LDBA run either takes a transition of $\Fl$ or enters $\Fellpo$,
  which corresponds to an accepting transition
  (Lines~\ref{ldl-to-ldba:product:start-complementing-Fellpo}\hspace{0mm}--\hspace{0.2mm}\ref{ldl-to-ldba:product:end-complementing-Fellpo} in Figure~\ref{fig:alg:ldl-to-ldba:product}).
  Hence, the LDBA run $\rho$ is accepting.

  \newcommand\ql{q^\ell}
  \newcommand\qellpo{q^{\ellpo}}
  Suppose a word $w = x_0 x_1 \ldots$ does not belong to $L(\varphi)$.
  We show that every LDBA run on $w$ is non-accepting.
  \li
  \- Fix an LDBA run $\rho$ on $w$ that leaves the SLTM component
     at some moment $m$ and enters $\Fl\x\overline{\Fellpo}$,
     for some even $\ell\in\{0,\ldots,k\}$.
     (The special case of $k$ being even and $\ell=k$ is not possible as it contradicts $w \not\in L(\varphi)$.)
  \- The infinite sequence $q^\ell_m q^\ell_{m+1}\ldots$,
     where $q^\ell_i = \rho[i]_{\mid \Fl}$ for all $i\geq m$,
     is an infinite run of $\Fl$ starting at moment $m$,
     so the word is accepted by $\Fl$.
  \- We are going to show that the LDBA run $\rho$ eventually enters $\delta^{\Fellpo}$ then never leaves it,
     meaning that the run is non-accepting.
     The argument is similar to the proof of Lemma~\ref{lem:ldl-to-dpa}.
  \- Since $w \not\in L(\varphi)$,
     $w$ is accepted by the obligation graph $\bar {\mc G}$ for $L(\neg\varphi)$.
     Let  $v_0 \trans{x_0} v_1 \trans{x_1} \ldots$ be an accepting run of $\bar {\mc G}$ on $w$.
  \- The sequence $(q^\ell_m,v_m)\trans{x_m}(q^\ell_{m+1},v_{m+1})\trans{x_{m+1}} \ldots$ ends in an SCC of $\mc F^\ellpo_N$,
     where $\mc F^\ellpo_N$ is the \NFA constructed by Definition~\ref{def:FlN}.
     That SCC is not removed in Step 2 of the definition
     because the sequence $v_m \trans{x_m} v_{m+1} \trans{x_{m+1}}\ldots$ visits a $\bar {\mc G}$-accepting transition infinitely often.
     Let $M$ be the moment after which the sequence stays in that SCC.
  \- We next show that ($\dagger$) for every $j\geq M$,
     the entry state $q^\ellpo_e$ for $q^\ell_j$ exists and starts an infinite path in $\Fellpo$ on $w[j{:}] = x_j x_{j+1}\ldots$.
     This implies that the LDBA run either eventually enters $\delta^{\Fellpo}$ and then never leaves it, or
     the LDBA run does not \emph{enter} $\delta^{\Fellpo}$ at moments $j\geq M$ because it already follows it.
     In both cases, the LDBA run is non-accepting.
  \- We now prove the claim ($\dagger$).
     Since $(q^\ell_j,v_j)$ belongs to an SCC of $\Fellpo_N$\!,
     by Definition~\ref{def:FlN-to-Fl}
     the \DFA has an entry state of the form $q^\ellpo_e = (q^\ell_j,\Vinit_{q_j^\ell})$
     where $v_j \in \Vinit_{q_j^\ell}$.
     Starting the subset construction from $(q^\ell_j,\Vinit_{q_j^\ell})$,
     let $(q_i^\ell,V_i)$ be the state reached at moment $i\geq j$.
     An induction on $i$ shows that $v_i\in V_i$:
     this holds initially, and the transition
     $(q_i^\ell,v_i)\trans{x_i}(q_{i+1}^\ell,v_{i+1})$ of $\mc F^{\ellpo}_N$
     implies that $v_{i+1}$ belongs to the successor subset $V_{i+1}$.
     Hence, none of these subsets is empty,
     so $q^\ellpo_e$ indeed starts an infinite path of $\mc F^{\ellpo}$ on $w[j{:}]$.
  \il

  Finally,
  the complexity claim follows from
  the fact that the SLTM has $2^{2^{O(|\varphi|)}}$\,states and constructible within the same time bound (Fact~\ref{fact:SLTM}),
  there are $2^{O(|\varphi|)}$ levels (Fact~\ref{fact:cocoa-nof-levels}), and
  each level \DFA has $2^{2^{O(|\varphi|)}}$\,states and constructible within the same time bound (Lemma~\ref{lem:Fl-size}).
\end{proof}

\section{Related Works}\label{sec:related-works}

Conceptually,
the asymptotically optimal translations of LTL into deterministic $\omega$-automata
follow two broad routes.

The first route proceeds by first translating an LTL formula into an NBA~\cite{VardiWolper-AnAutomataTheoretic}
and then determinizing the NBA using either the Safra construction~\cite{DBLP:conf/focs/Safra88,Pit07}
or the Muller--Schupp construction~\cite{MULLER199569,10.1007/978-3-540-70575-8_59}.
These constructions encode, in a bounded-size deterministic state,
sufficient information about the set of runs of the NBA and about their acceptance progress.
In the Safra case,
a deterministic state is encoded using so-called history trees.
In the Muller--Schupp case,
a deterministic state is encoded using levels of so-called split trees.
L\"oding and Pirogov~\cite{loding_et_al:LIPIcs.ICALP.2019.120}
unify the Safra-style and Muller--Schupp-style determinization approaches by
giving a single meta-construction from NBAs to DPAs.
Their construction uses ranked slices as deterministic states,
which is an ordered tuple of pairwise disjoint sets of NBA states, annotated with ranks.
This data structure incorporates history trees used by Safra and split trees used by Muller and Schupp.
The transition function in the unified construction
can be instantiated to obtain either the Safra construction or the Muller--Schupp construction.

A second route,
introduced by Esparza, Kret\'insk\'y, and Sickert~\cite{DBLP:journals/jacm/EsparzaKS20},
translates LTL directly into DRA.
The crucial step is decomposition of LTL into a Boolean combination of simpler languages,
that are later translated separately into DRA;
the final DRA is the disjunction of the individual DRAs.
The authors use LTL formulas as automata states,
which evolve according to how LTL formulas transform after reading a letter.

Our approach is closer to that of Esparza et al.\ 
because it also relies on a language decomposition,
namely the decomposition via natural colors~\cite{DBLP:conf/fsttcs/EhlersS22}.
In the following sections,
we first describe their approach then compare it with ours in detail.

One final note.
In the paper~\cite{10.1145/2362355.2362357},
Boker and Kupferman study the problem of constructing an NCA
for a co-B\"uchi language given by an NBA or another automaton model
with the expressive power of omega-regular languages.
Given an NBA,
they introduce the augmented subset construction SC(NBA)$\x$NBA
which builds the product of the NBA with the subset construction SC(NBA) applied to it.
The product construction SLTM$\x$NBA used in our Definition~\ref{def:FlN}
of $\mc F^\ell_N$ for the case $\ell=1$ resembles the augmented subset construction SC(NBA)$\x$NBA.
Moreover,
the acceptance condition in the definition of $\mc F^\ell_N$
coincides with the one they place on top of the augmented subset construction when constructing the NCA.

\paragraph*{The work of Esparza, Kret\'insk\'y, and Sickert~\cite{DBLP:journals/jacm/EsparzaKS20}}

Esparza et al. propose a decomposition
$L(\varphi) = \bigcup_i U_i \cap V_i$
of the language of a given LTL formula $\varphi$
into a finite union of intersections of a deterministic B\"uchi language $U_i$
and a co-B\"uchi language $V_i$.
They construct a DRA recognizing each language $U_i \cap V_i$,
and the final DRA is obtained by taking the union of these DRAs.
To understand the decomposition result,
we need two notions:
the after-formula automaton and rewriting a formula using advice.

Intuitively,
the after-formula automaton tracks,
after reading a finite prefix,
which formula the remaining suffix has to satisfy for the whole word to satisfy the original formula.
Each state of such an automaton is an LTL formula that is a Boolean combination of subformulas of the original formula.
Since there are only doubly exponentially many such formulas,
the after-formula automaton has doubly exponential size.
The after-formula automaton resembles the right-congruence automaton used in our work:
in fact, their minimal versions are isomorphic.
Given a prefix $w_{0i}$ and an LTL formula $\varphi$,
let $\mathit{aft}(\varphi,w_{0i})$ denote the LTL formula that the suffix has to satisfy
for the whole word to satisfy $\varphi$.
For example,
$\mathit{aft}(\LTLX a,x) = a$ for any letter $x$,
and $\mathit{aft}(a \LTLU b,a) = a \LTLU b$.

We now explain the concept of rewriting a formula using advice.
Advice consists of two parts:
the assumption that exactly the formulas in a set $M$ of co-safety temporal subformulas of $\varphi$
hold infinitely often on the word $w$,
and the assumption that exactly the formulas in a set $N$ of safety temporal subformulas of $\varphi$
eventually always hold on $w$.
If a word satisfies the advice assumptions,
then the original formula can be rewritten so that the word satisfies the rewritten formula
if and only if it satisfies the original formula.
For instance,
given the LTL formula $\varphi = \LTLG(b \LTLU c)$ and
assuming that the word satisfies the advice $M = \{b \LTLU c\}$,
checking whether the word satisfies $\varphi$ amounts to checking
whether it satisfies the rewritten formula
$\varphi[M]_\nu = \LTLG(b \mathcal{W} c)$.
We denote by $\varphi[M]_\nu$ the formula rewritten using the advice $M$,
and by $\varphi[N]_\mu$ the formula rewritten using the advice $N$.
The important insight is that rewriting using advice $M$ produces a safety formula
(indicated by the subscript $_\nu$),
whereas rewriting using advice $N$ produces a co-safety formula
(indicated by the subscript $_\mu$).
This enables the language decomposition into B\"uchi/co-B\"uchi intersections.

We are now ready to state how Esparza et al. obtain the desired decomposition.
To this end, they prove the Master Theorem:
a word $w = x_0 x_1 \ldots$ satisfies $\varphi$
if and only if
there exist
an advice set $M$ of co-safety temporal subformulas of $\varphi$ and
an advice set $N$ of safety temporal subformulas of $\varphi$
such that:
\lo
\-[(1)] ~$\exists i{:}\ w_i \models \mathit{aft}(\varphi,w_{0i})[M]_\nu$, where $w_{0i} = x_0 \ldots x_{i-1}$ and $w_i = x_i x_{i+1}\ldots$;
\-[(2)] ~$w \models \bigwedge_{\psi \in M} \LTLGF \psi[N]_\mu$; and
\-[(3)] ~$w \models \bigwedge_{\psi \in N} \,\LTLFG \psi[M]_\nu$.
\ol
They show that conditions~(1) and~(3) can be expressed using deterministic co-B\"uchi automata,
and condition~(2) using a deterministic B\"uchi automaton;
therefore, their intersection is a DRA.
The number of such individual DRAs is at most singly exponential,
which is the maximal number of advice-set pairs $(M,N)$.
Finally, the DRA recognizing $L(\varphi)$ is obtained as the disjunction of the individual DRAs.

\paragraph*{Comparison with our work}

\parbf{Input.}
Our procedure accepts weak alternating B\"uchi automata as input;
therefore, it can also translate LDL~\cite{DBLP:conf/ijcai/GiacomoV13,DBLP:journals/iandc/FaymonvilleZ17},
which is more expressive than LTL and covers all omega-regular languages,
with optimal asymptotic complexity.
Esparza et al.'s procedure relies on syntactic rewriting of LTL formulas,
so extending it to LDL would require additional work.
The procedure of Esparza et al.\ was extended to LTL with past (pLTL) in~\cite{PitLTLpast-to-EJK}.
Formulas of pLTL can be translated into NBAs of singly exponential size~\cite{DBLP:conf/mfcs/GastinO03}.
Since our procedure internally works with the SLTM and NBAs for the formula and its negation,
extending our procedure to pLTL would require a way to construct SLTMs from pLTL formulas.

\parbf{Output.}
Our procedure outputs tDPAs, whereas Esparza et al.'s procedure outputs DRAs.
In general,
translating a DRA into a DPA may incur an exponential blow-up~\cite{646837.708484},
so DRAs can be more succinct.
However, the blow-up does not occur for the DRAs obtained from LTL
because the number of Rabin pairs is logarithmic in the number of states.
Our procedure also directly constructs COCOA;
Esparza et al.'s work can be used to construct COCOA via the route
LTL$\to$DRA$\to$DPA$\to$COCOA.
Both approaches can be used to construct LDBAs of doubly exponential size,
which is tight for full LTL.
This bound is not optimal for the fragment LTL$_{\setminus GU}$,
i.e., formulas in which no $\LTLU$ operator occurs in the scope of a $\LTLG$ operator;
for this fragment, singly exponential LDBAs can be constructed~\cite{10.1007/978-3-662-46681-0_57}.
There are formulas~\cite{KV98,AL04} in the fragment LTL$_{\setminus GU}$
for which the smallest deterministic automaton is at least doubly exponential.
Therefore, for such formulas,
the after-formula automaton and the SLTM are also at least doubly exponential.
Thus, for these formulas,
both our construction and Esparza et al.'s construction produce LDBAs of non-optimal doubly exponential size.

\parbf{Decomposition.}
Both works decompose the language $L(\varphi)$ of a given LTL formula $\varphi$
into a combination of simpler languages.
In this paper,
the language is decomposed into a chain of co-B\"uchi languages
$L^1 \supset \dots \supset L^k$.
This COCOA decomposition is canonical,
whereas the decomposition provided by the Master Theorem is not.

\parbf{Further comparison notes.}
\li
\- The minimized after-formula automata used by Esparza et al. and
   the right-congruence automata used in this work are isomorphic.
\- Our translation proceeds incrementally by analyzing $L^1$ through $L^k$,
   whereas the Esparza et al.\ procedure enumerates independent choices of advice sets $M$ and $N$
   and combines the resulting automata by disjunction.
\- The correctness proofs of both approaches rely on a ``delaying'' property.
   In our work, floating automata are insensitive to delays:
   staying in the SLTM for an arbitrary finite number of steps before jumping into the floating-automaton component
   does not affect whether the word is accepted.
   In Esparza et al.'s work,
   the Master Theorem still holds if the index $i$ in condition~(1) is required to be larger than any given number,
   while conditions (2) and (3) can take arbitrary suffix words instead of the original one.
\- When translating into LDBA,
   both procedures use the SLTM---equivalently, the minimized after-formula automaton---as the initial component of the LDBA
   before jumping into one of exponentially many deterministic accepting components.
\il

\section{Conclusion}\label{sec:conclusion}
In this paper,
we provided a direct translation procedure
from weak alternating automata (WAA)
to a chain-of-co-B\"chi automata representation (COCOA),
and the variants translating into deterministic parity automata and limit-deterministic B\"uchi automata.
All procedures have optimal asymptotic complexity.
Together with the encoding of linear dynamic logic (LDL) into weak alternating automata,
the procedure provides a novel way of translating from LDL into aforementioned automata.

We looked at the translation conceptually and avoided discussing implementation details.
We now make a few remarks regarding the efficiency of the translation procedure in practice.
First, the disjoint SCCs of the SLTM of a given language naturally induce parallelization,
because such SCCs can be processed independently of each other.
Second, the algorithms for constructing a new COCOA level from a previous level are standard graph algorithms having efficient implementations.
Third, constructing the SLTM itself is also naturally parallelizable.
One challenge is to implement the SLTM construction efficiently,
since a naive implementation, as described in Section~\ref{sec:SLTM},
may perform an unnecessarily large number of equivalence checks between pairs of weak alternating automata,
with each check requiring Miyano-Hayashi's construction.
When translating from LTL,
the minimized after-formula automaton from~\cite{DBLP:journals/jacm/EsparzaKS20},
which is constructed using LTL formula rewriting and satisfiability checking,
can be used instead of the SLTM.
We leave further efficiency analysis to future work.

We conclude with a remark on the problem of direct construction of HD-tNCAs
that avoids constructing DCAs first.
Despite being exponentially more succinct than tDCAs,
there are currently no procedures that
guarantee to construct HD-tNCAs
while always avoiding the intermediate blow-up of the size of the equivalent tDCA \emph{and} that take full LTL as input.
Consider the language family $L_1,L_2,\ldots$
that Kuperberg and Skrzypczak define
to prove the exponential succinctness gap between HD-tNCAs and tDCAs~\cite[Thm.1]{kuperberg2015determinisation}.
Each language $L_n$ can be expressed using HD-tNCA $\mc C_n$ of size $2n+1$,
and a simple modification of $\mc C_n$ yields WAA $\mc B_n$ of the same size
(in fact, \emph{nondeterministic} weak B\"uchi automaton).
They show that any tDCA recognizing $L(\mc C_n)$ (hence $\mc B_n$) must have at least $\frac{2^n}{2n+1}$ states.
We can run $\mc B_n$ through our translation procedure to produce HD-tNCA~%
  \footnote{Define the language of the HD-tNCA as
            $\{w \mid \forall J.\exists \text{SLII}^{L(\mc B_n)}_{J,w} w': w' \in L(\mc B_n)\}$,
            then run Definition~\ref{def:FlN} for $\ell=2$ assuming that
            level-1 \DFA has the universal language; use NBA $\mc B_n$ as the obligation graph.}.
The SLTM for the language of $\mc B_n$ has a single state.
Applying Definition~\ref{def:FlN} yields an \NFA that is actually deterministic (i.e., \DFA),
so applying the HDification \DFA-to-HD-tNCA used in the proof of Lemma~\ref{lem:DFA-to-HD-tNCA}
gives HD-tNCA $\mc C'_n$ which has only $2n$ states.
So, for this particular example, our procedure WAA$\to$HD-tNCA avoids the exponential blow-up, right?
Not quite.
The above reasoning hid the construction of the single-state SLTM and assumed it was given.
Running the algorithm to construct the SLTM under the hood executes the Miyano-Hayashi construction,
which blows-up exponentially on $\mc B_n$.
Thus, if we need to construct the SLTM,
our translation still exhibits the unnecessary exponential blow-up.
(The alternative path,
 by first translating NBA $\mc B_n$ into NCA using~\cite{10.1145/2362355.2362357},
 then running the $k$-breakpoint construction of \cite{kuperberg2018width}
 also results in the exponential blow-up.)
In \cite{10.1007/978-3-030-24886-4_14},
the authors,
first,
suggest the logic E$\nu$TL (``eventually safety'') that
can express the family $L_1,L_2,\ldots$ using formulas of linear size only,
and second, the algorithm to construct history-deterministic automata.
That algorithm is doubly exponential in the worst case,
but it performs in linear time on the language family $L_1,L_2,\ldots$.
The formulas of E$\nu$TL always have SLTMs with a single state,
which avoids the need to compute it that hindered our translation in the previous example.
The thorough analysis how our translation performs on E$\nu$TL is left for future work.

\medskip
\noindent
{\small \textbf{Acknowledgments}. We thank anonymous LICS reviewers for numerous constructive suggestions.}

\bibliography{bib.bib}

\end{document}